\documentclass[sigconf]{acmart}
%[a4paper,11pt,onecolumn]{article}
%[letterpaper, 10 pt, conference]{ieeeconf}
%\overrideIEEEmargins
\setcopyright{rightsretained}
\usepackage[T1]{fontenc}
\usepackage[utf8]{inputenc}
\usepackage{enumerate}
\newtheorem{mydef}{{Definition}}
\newtheorem{mythm}[mydef]{{Theorem}}
\newtheorem{mylem}[mydef]{{Lemma}}
\newtheorem{myexa}[mydef]{{Example}}

\newtheorem{mypro}[mydef]{{Proposition}}
%\setmainfont{Droid Serif}
%\setmainfont{Liberation Serif}

\begin{document}
\title{Block SOS Decomposition}
\author{Haokun Li}
\affiliation{
  \institution{LMAM \& School of Mathematical Sciences, Peking University}
  \city{Beijing}
  \state{China}
  \postcode{100871}
}
\email{ker@pku.edu.cn}
\author{Bican Xia}
\affiliation{
  \institution{LMAM \& School of Mathematical Sciences, Peking University}
  \city{Beijing}
  \state{China}
  \postcode{100871}
}
\email{xbc@math.pku.edu.cn}
\begin{abstract}
A widely used method for determining whether a multivariate polynomial is a sum of squares of polynomials (SOS), called SOS decomposition, is to decide the feasibility of corresponding semi-definite programming (SDP) problem which can be efficiently solved in theory. In practice, although existing SDP solvers can work out some problems of big scale, the efficiency and reliability of such method decrease greatly while the input size increases. Recently, by exploiting the sparsity of the input SOS decomposition problem, some preprocessing algorithms were proposed \cite{Dai2015,DBLP:conf/cdc/PermenterP15}, which first divide the input problem satisfying special properties into smaller SDP problems and then pass the smaller ones to SDP solvers to obtain reliable results efficiently. A natural question is that to what extent the above mentioned preprocessing algorithms work. That is, how many polynomials satisfying those properties are there in the SOS polynomials? In this paper, we define a concept of {\em block SOS decomposable polynomials} which is a generalization of those special classes of polynomials in \cite{Dai2015} and \cite{DBLP:conf/cdc/PermenterP15}. Roughly speaking, it is a class of polynomials whose SOS decomposition problem can be transformed into smaller ones (in other words, the corresponding SDP matrices can be block-diagnolized) by considering their supports only (coefficients are not considered). Then we prove that the set of block SOS decomposable polynomials has measure zero in the set of SOS polynomials. That means if we only consider supports (not with coefficients) of polynomials, such algorithms decreasing the size of SDPs for those SDP-based SOS solvers can only work on very few polynomials. %As a result, this shows that the SOS decomposition problems that can be optimized by the above mentioned preprocessing algorithms are very few.
\end{abstract}
\maketitle

\section{Introduction}

Given a multivariate polynomial $f$ with real coefficients, consider whether there exist polynomials $q_1,\ldots,q_s$ satisfying $f=\sum_{i=1}^sq_i^2$. This is the so-called SOS decomposition problem which is a very famous and important problem originated from Hilbert's famous work and speech \cite{hilbert1,hilbert2}. It is well-known that Artin %\index{Artin, E.}
proved Hilbert's conjecture \cite{artin}, {\it i.e.}  gave a positive answer to Hilbert's 17th problem whether a positive semi-definite polynomial can be written as sum of squares of rational functions. Artin's theory and method, which is now called the Artin-Schreier theory, lead to the development of modern real algebra and real algebraic geometry. Two major milestones of the subject are quantifier elimination theory over real closed fields established by Tarski \cite{ta51} and the Positivstellensatz discovered by Krivine \cite{krivine1964anneaux} and Stengle \cite{Stengle}.

Although the number of positive semi-definite polynomials is much larger than that of SOS polynomials \cite{blekherman2006there}, replacing nonnegative polynomial constraints with SOS ones and thus requiring an SOS representation of a multivariate polynomial is a key step of many applications, see for example \cite{vand96,lasserre2001global,Parrilo03,kim05,DBLP:journals/siamjo/Schweighofer05}. There have been effective algorithms for determining whether a polynomial can be represented as sum of squares of polynomials, see for example \cite{choi95,powers,parrilo,lasserre2001global,parrilo2003minimizing}. By Gram matrix representation \cite{choi95}, the problem of SOS decomposition can be transformed into an SDP problem, which can be solved symbolically or numerically. Numerical algorithms for SOS decompositions can handle big scale problems and may be used to get exact results \cite{DBLP:conf/issac/KaltofenLYZ08}. %One main numerical method to solve \sos\ decomposition problem is to convert it to {\small SDP} problem.
Actually, there exist some well-known free available SOS solvers which are based on numerical SDP solvers \cite{1393890,DBLP:journals/corr/PapachristodoulouAVPSP13,DBLP:journals/corr/Seiler13}. %\cite{PPP02,lofberg2004yalmip,seiler2013sosopt}.

Although existing SDP solvers can work out some problems of big scale, the efficiency and reliability of such method decrease greatly while the input size increases \cite{DBLP:journals/jsc/DaiGXZ17}. Therefore, decreasing the size of corresponding SDP matrix is an important way to optimize efficiency and reliability of tools based on SDP to solve SOS decomposition problem. Recently, by exploiting the sparsity of the input SOS decomposition problem, some preprocessing algorithms were proposed \cite{Dai2015,DBLP:conf/cdc/PermenterP15}, which first divide the input problem satisfying special properties into smaller SDP problems and then pass the smaller ones to SDP solvers to obtain reliable results efficiently. In \cite{Dai2015}, Dai and Xia defined a class of polynomials called {\em split polynomial}. A split polynomial can be decomposed into several smaller sub-polynomials such that the original polynomial is SOS if and only if the sub-polynomials are all SOS. Thus the original SOS problem can be decomposed equivalently into smaller sub-problems. In \cite{DBLP:conf/cdc/PermenterP15}, Permenter and Parrilo defined the {\em minimal-coordinate-projection} of a given SDP and presented a polynomial time algorithm to find minimal-coordinate-projection. Indeed, finding the minimal-coordinate-projection is the most efficient way by now to take use of the sparsity of SOS problems and to decrease the size of input to SDP solvers for those SDP based SOS solvers.

%For instance,on polynomial $B(x;m):=(\sum_{i=1}^{3m+2}x_i)^2 -2\sum_{i=1}^{3m+2} x_i \sum_{j=0}^m x_{i+3j+1}$, \cite{Baston1969} studied on $B(x;m)$. At this moment, a problem worthy considering is that how many SOS decomposition problem can be optimized by such work.

A natural question is that to what extent the above mentioned preprocessing algorithms work. That is, how many polynomials satisfying those definitions or properties are there in the SOS polynomials? In this paper, we define a concept of {\em block SOS decomposable polynomials} which is a generalization of those special classes of polynomials in \cite{Dai2015} and \cite{DBLP:conf/cdc/PermenterP15}. Roughly speaking, it is a class of polynomials whose SOS decomposition problem can be transformed into smaller ones (in other words, the corresponding SDP matrices can be block-diagnolized) by considering their supports only (coefficients are not considered). Then we prove that the set of block SOS decomposable polynomials has measure zero in the set of SOS polynomials. That means if we only consider supports (not with coefficients) of polynomials, such algorithms decreasing the size of SDPs for those SDP-based SOS solvers can only work on very few polynomials. As a result, this shows that the SOS decomposition problems that can be optimized by the above mentioned preprocessing algorithms are very few.

The rest part of this paper is organized as follows. The concepts of block SOS decomposition, (proper) block SOS support and block SOS decomposable polynomial (Definition \ref{def:square_split-able}) are given in Section \ref{sec:block}. It is also shown in this section that split polynomials are block SOS decomposable polynomials (Proposition \ref{pro:split}) and a polynomial having minimal coordinate projection of corresponding SDP is block SOS decomposable (Proposition \ref{pro:minimal}). The set $\sum_{i=1}^m\sum\mathbb{R}[\bar{x}]_{Q_i}^2$ being closed in $\mathbb{R}[\bar{x}]_{2k}$ (Theorem \ref{thm:Qiclose}) is proven in Section \ref{sec:closed}. %This shows block SOS support can be passed on to sub-polynomial (Corollary \ref{cor:s1s2}). Also we show SOS decomposition of a certain class of polynomials is unique(Lemma \ref{lem:(a-b)2}).
Section \ref{sec:few} proves that the set of block SOS decomposable polynomials in $\mathbb{R}[\bar{x}]_{2k}$ has measure zero in $\sum \mathbb{R}[\bar{x}]_k^2$ (Theorem \ref{thm:zero}).
The paper is concluded in Section \ref{sec:conclusion}.

\section{Block SOS Decomposition}\label{sec:block}

\subsection{Notations}

In this paper, $\bar{x}$ is $n$-dimensional where $n$ is a constant. $\mathbb{R}$ denotes the field of real numbers and $\mathbb{N}$ denotes the ring of integers.
For a mapping $\varphi$, let $\operatorname{rng}(\varphi)$ denote the range of $\varphi$ and $\operatorname{ker}(\varphi)$ denote the kernel of $\varphi$.

For polynomial $f\in \mathbb{R}[\bar{x}]$, the support of $f$, denoted by $S(f)$, is $$\{\alpha\in\mathbb{N}^n~|~ {\rm the~ coefficient~ of}~ \bar{x}^\alpha~ {\rm is~ not~} 0~ {\rm in}~ f\}.$$ $f$ is SOS if and only if there exist polynomials $q_1,\ldots,q_s\in\mathbb{R}[\bar{x}]$ satisfying $f(\bar{x})=\sum_{i=1}^sq_i^2$.

For a set $X\subseteq\mathbb{N}^n$, let $\operatorname{conv}(X)$ denote the convex hull of $X$.
For a set $A\subseteq\mathbb{R}[\bar{x}]$, let $A^2:=\{f^2~|~f\in A\}$ and
$$\sum A^2:=\left\{\sum_{i=1}^sf_i^2~|~f_i \in A~ {\rm for}~ i=1,\ldots,s~ {\rm and}~ s\in\mathbb{N}\right\}.$$ So the set of all polynomials that are SOS can be denoted by $\sum\mathbb{R}[\bar{x}]^2$.

For vector spaces $A,B$, let $$A\oplus B:=\{(a,b)|a\in A \land b\in B\},$$ $$\bigoplus_{i=1}^mA_i:=A_1\oplus\cdots\oplus A_m.$$
Let ${\mathbb S}^{n\times n}$ denote the set of $n\times n$ symmetric matrices and ${\mathbb S}_+^{n\times n}$ denote the set of $n\times n$ positive semi-definite matrices. $\{0,1\}^{n\times n}$ denotes the $n\times n$ matrices with $0,1$ as elements.
For vectors $a$ and $b$, let $\langle a,b\rangle$ denote the inner product of $a$ and $b$. For two $n\times n$ matrices $A, B$, let $\langle A, B\rangle:=\sum_{i,j}A_{ij}B_{ij}=\operatorname{Tr}(A^{\rm T}B).$
For a given $k\in \mathbb{N}$ and a given $Q \subseteq \mathbb{N}^n$, let
\begin{align*}
    \Lambda(k):&=\{(a_1,...,a_n)\in\mathbb{N}^n~|~\sum_{i=1}^{n}a_i\leq k\},\\
    \mathbb{R}[\bar{x}]_k:&=\{f\in\mathbb{R}[\bar{x}]~|~\deg(f)\leq k\},\\
    \mathbb{R}[\bar{x}]_Q:&=\{f\in\mathbb{R}[\bar{x}]~|~S(f)\subseteq Q\},\\
    \sigma(Q):&=\{a\in Q~|~\forall b,c\in Q, b+c\neq 2a\lor b=c=a\}.
\end{align*}
So $\sum\mathbb{R}[\bar{x}]_Q^2$ is the set of polynomials which can be represented as sum of squares of the polynomials in $\mathbb{R}[\bar{x}]_Q.$

\subsection{Main concept}
\begin{mydef}[\cite{Dai2015}]
Suppose $f \in \mathbb{R}[\bar{x}]$ and $Q \subseteq \mathbb{N}^n$. If
$$\text{f is SOS} \Rightarrow \exists \{f_i\}_{i=1}^{s}(f=\sum_{i=1}^sf_i^2 \land S(f_i)\subseteq Q),$$
then $Q$ is said to be an {\em SOS support} of $f$ and denoted by $\operatorname{SOSS}(f,Q)$.
\end{mydef}

The following definition is a general expression of the coefficient-independent optimization of SOS decomposition problems.

\begin{mydef} \label{def:square_split-able}
%Let $k\in\mathbb{N}$ and $T\subseteq\mathbb{N}^n$.
Given $T\subseteq\mathbb{N}^n$ and $Q_i\subseteq\mathbb{N}^n$ for $i=1,\ldots,m$. The set $\{Q_i\}_{i=1}^{m}$ is said to be a {\em block SOS support} of $T$ if
\begin{enumerate}
\item $\{f~|~f\in\sum\mathbb{R}[\bar{x}]^2, S(f)=T\}\subseteq\sum_{i=1}^m\sum\mathbb{R}[\bar{x}]_{Q_i}^2$ and
\item $Q_i\cap Q_j=\emptyset ~{\rm for}~ i\neq j$.
\end{enumerate}

A block SOS support of $S(f)$ where $f\in\mathbb{R}[\bar{x}]$ is also called a {\em block SOS support} of $f$.

Suppose $\{Q_i\}_{i=1}^{m}$ is a block SOS support of $T\subseteq\mathbb{N}^n.$ If for any $Q_i$, $Q_i\neq\emptyset$ and \[\exists f\in\mathbb{R}[\bar{x}] (S(f)=T\land\lnot\operatorname{SOSS}(f,Q_i)),\] then $\{Q_i\}_{i=1}^{m}$ is said to be a {\em proper block SOS support} of $T$ and $T$ is {\em block SOS decomposable}. Similarly, for $f\in\mathbb{R}[\bar{x}]$, if $\{Q_i\}_{i=1}^{m}$ is a proper block SOS support of $S(f)$, then $\{Q_i\}_{i=1}^{m}$ is said to be a {\em proper block SOS support} of $f$ and $f$ is {\em block SOS decomposable}.
\end{mydef}

\begin{myexa}\label{exa:f}
Suppose $T=\{(0),(4),(8)\}$. Then $\{Q_1,Q_2\}$ where $Q_1=\{(0),(4)\}$ and $Q_2=\{(2)\}$ is a proper block SOS support of $T$.
\end{myexa}
\begin{proof}
For any $f\in \mathbb{R}[x]$ with $S(f)=T$, we denote $f=ax^8+bx^4+c$ where $a\neq 0,b\neq 0,c\neq 0$. Then
\[\begin{array}{rl}
f ~ {\rm is~ SOS} & \Leftrightarrow f\geq 0 \quad({\rm since}~ f ~ {\rm is~ a ~ univariate~ polynomial})\\
 & \Rightarrow a\geq 0 \land c\geq 0 \land f(\frac{\sqrt[8]{c}}{\sqrt[8]{a}})\geq 0\\
 & \Rightarrow a\geq 0 \land c\geq 0 \land 2c+b\frac{\sqrt{c}}{\sqrt{a}}\geq 0\\
 & \Rightarrow a\geq 0 \land c\geq 0 \land b\geq -2\sqrt{a}\sqrt{c}.
\end{array}\]
Thus, if $b\geq 0$ and $f$ is SOS, then $f=(\sqrt{a}x^4)^2+(\sqrt{b}x^2)^2+\sqrt{c}^2$.

If $b\leq 0$ and $f$ is SOS, there exist $a' (0\leq a'\leq \sqrt{a})$ and $c' (0\leq c'\leq \sqrt{c})$ such that $b=-2a'c'$ and \[f=(\sqrt{a-{a'}^2}x^4)^2+\sqrt{c-{c'}^2}^2+(a'x^4+c')^2.\]
That proves $\{Q_1,Q_2\}$ is a block SOS support of $T$. Moreover, it is not hard to see that neither $\operatorname{SOSS}(f,Q_1)$ nor $\operatorname{SOSS}(f,Q_2)$ holds. Thus $\{Q_1,Q_2\}$ is a proper block SOS support of $T$.
\end{proof}

\subsection{Related work}
We show in the rest of this section that the optimization methods in \cite{Dai2015} and \cite{DBLP:conf/cdc/PermenterP15} can be restated as finding block SOS support. For reader's convenience, we briefly introduce the related concepts and results of \cite{Dai2015} and \cite{DBLP:conf/cdc/PermenterP15} here. Roughly speaking, the methods in \cite{Dai2015} and \cite{DBLP:conf/cdc/PermenterP15} deal with polynomials whose corresponding SDP matrices can be block-diagonalized while, by Definition \ref{def:square_split-able}, all such polynomials are block SOS decomposable.

\begin{mydef}[Definition 6 of \cite{Dai2015}]
Let $Q,R\subseteq \mathbb{N}^n$. For any $\alpha \in Q+Q$, define:
\begin{align*}
  \varphi_Q(\alpha)&=\{\beta \in Q~|~  \exists \gamma \in Q\: (\beta+ \gamma =\alpha) \},\\
  H_Q(\alpha,R)&=\begin{cases}
    \emptyset  &\alpha \in R;\\
    \{\frac{1}{2}\alpha\} &\alpha \notin R \land \varphi_Q(\alpha)=\{\frac{1}{2}\alpha\};\\
    H_Q'(\alpha,R) & {\rm otherwise};
  \end{cases},\\
  \psi_Q(\alpha)&=H_Q(\alpha,\emptyset),
\end{align*}
${\rm where}~ H_Q'(\alpha,R)=\bigcup_{\beta\in Q ,\exists \gamma \in Q\:( \beta +\gamma =\alpha) }H_Q(2\beta ,R \cup \{\alpha\}).$
\end{mydef}

\begin{mydef}[Definition 8 of \cite{Dai2015}] \label{Dai:D8}
Let $Q$ be a finite set satisfying $\operatorname{SOSS}(p,Q)$ for a polynomial $p$. If there exist some pairwise disjoint nonempty subsets $T_i (i=1,\ldots,u)$ of $\sigma(Q)$ such that
\begin{enumerate}
  \item $\psi_Q(\alpha+\beta)\subseteq T_i$ for any $\alpha, \beta \in \{\gamma |\gamma \in Q ,\psi_Q(2\gamma )\subseteq T_i\}$ holds for any $i=1,\ldots,u$, and
  \item for any $\alpha \in S(p)$, there exists exact one $T_i$ such that $\psi_Q(\alpha) \subseteq T_i$,
\end{enumerate}
then $p$ is said to be a {\em split polynomial} with respect to $T_1,\ldots,T_u$.

If $p$ is a {split polynomial} with respect to a non-empty set $T \subseteq \sigma(Q)$ and its complement in  $\sigma(Q)$, we simply say that $p$ is a split polynomial with respect to $T$.
\end{mydef}

\begin{mythm}[Theorem 4 of \cite{Dai2015}] \label{Dai:T4}
Suppose $p=\Sigma c_{\alpha}x^{\alpha}$ is a split polynomial with respect to $T_1,\ldots,T_u$, then $p$ is SOS if and only if each $p_i=\Sigma_{\alpha \in S(p), \psi_Q(\alpha) \subseteq T_i} c_{\alpha}x^{\alpha}$ is SOS for $i=1,\ldots, u$.
%Especially, for each $p_i$, $\operatorname{SOSS}(p_i,Q_i)$ holds where $Q_i=\{\alpha\in Q|\psi_Q(2\alpha)\subseteq T_i\}$. \emph{(By the proof of Theorem 3 in \cite{Dai2015})}
\end{mythm}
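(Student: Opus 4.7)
The plan is as follows. The implication $(\Leftarrow)$ is immediate: condition~(2) of Definition~\ref{Dai:D8} assigns every monomial $c_\alpha\bar{x}^\alpha$ of $p$ to exactly one $p_i$, so $p=\sum_{i=1}^u p_i$, and any sum of SOS polynomials is SOS.

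For $(\Rightarrow)$ I would work in the Gram-matrix picture. Since $\operatorname{SOSS}(p,Q)$ holds, $p$ being SOS yields a PSD matrix $M=(M_{\beta,\gamma})_{\beta,\gamma\in Q}$ satisfying
\[p(\bar{x})=\sum_{\beta,\gamma\in Q}M_{\beta,\gamma}\,\bar{x}^{\beta+\gamma}.\]
Set $Q_i:=\{\gamma\in Q\mid\psi_Q(2\gamma)\subseteq T_i\}$ for $i=1,\ldots,u$ and $Q_0:=Q\setminus\bigcup_{i\ge 1}Q_i$. The $Q_i$'s are pairwise disjoint by the disjointness of the $T_i$'s, and the goal is to show that $M$ may be replaced by a PSD matrix that is block-diagonal with respect to $\{Q_1,\ldots,Q_u\}$ and whose $Q_0$-rows and columns vanish. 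Granting this, writing $M=\operatorname{diag}(0,M^{(1)},\ldots,M^{(u)})$ one obtains $p=\sum_{i=1}^u p_i^{\ast}$, where $p_i^{\ast}$ is the SOS polynomial Gram-represented by $M^{(i)}$; by condition~(2) and the construction of $Q_i$, the support of $p_i^{\ast}$ lies in $\{\alpha\in S(p)\mid\psi_Q(\alpha)\subseteq T_i\}$, so $p_i^{\ast}=p_i$.

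The crux, and the main obstacle, is to rule out non-zero entries of $M$ outside the intended blocks. I would argue by a $\psi_Q$-chase. Suppose $M_{\beta,\gamma}\neq 0$ for some $\beta\in Q_i$ and $\gamma\in Q_j$ with $i\neq j$. The PSD inequality $|M_{\beta,\gamma}|^2\le M_{\beta,\beta}M_{\gamma,\gamma}$ forces $M_{\beta,\beta},M_{\gamma,\gamma}>0$; then the linear constraint $\sum_{\beta'+\gamma'=\alpha}M_{\beta',\gamma'}=c_\alpha$ (the right side being $0$ for $\alpha\notin S(p)$), applied iteratively, propagates non-vanishing diagonal entries exactly along the recursion that computes $H_Q$ in the definition of $\psi_Q(\beta+\gamma)$. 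Completing this chase should show that $\psi_Q(\beta+\gamma)$ meets both $T_i$ and $T_j$, contradicting conditions~(1) and~(2) of Definition~\ref{Dai:D8} together with the disjointness of the $T_i$'s. A parallel argument for $\gamma\in Q_0$ shows $M_{\gamma,\gamma}=0$, whence PSD-ness kills the whole row and column. The delicate part is making this chase rigorous---turning the nested recursion of $H_Q'$ into a clean combinatorial reachability statement on $Q$ and matching it against the linear and semi-definite constraints on $M$. Once this is done, block-diagonality is in hand, each block supplies a Gram representation of $p_i$, and the theorem follows.
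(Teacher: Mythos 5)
First, a point of reference: the paper does not prove this statement at all --- it is Theorem~4 of \cite{Dai2015}, quoted as a black box and used (together with the proof of Theorem~3 of \cite{Dai2015}) only to establish Proposition~\ref{pro:split}. So there is no in-paper proof to compare against, and your attempt must be judged on its own merits.

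On those merits there is a genuine gap: the entire mathematical content of the $(\Rightarrow)$ direction is the step you defer. Your $(\Leftarrow)$ direction is fine, and the reduction of $(\Rightarrow)$ to showing that a Gram matrix $M$ of $p$ supported on $Q$ vanishes off the blocks $Q_i\times Q_i$ is the right frame. But the ``$\psi_Q$-chase'' is exactly where the theorem lives, and you leave it at ``should show'' and ``the delicate part is making this chase rigorous.'' Concretely, what is missing: (i) you must show that the propagation of positive diagonal entries (via $M_{\beta,\gamma}\neq 0\Rightarrow M_{\beta,\beta},M_{\gamma,\gamma}>0$ and the constraints $\sum_{\beta'+\gamma'=\alpha}M_{\beta',\gamma'}=c_\alpha$) terminates precisely at exponents $\beta$ with $\varphi_Q(2\beta)=\{\beta\}$, and you must handle the visited-set $R$ in the recursion $H_Q(\cdot,R)$ --- cycles can make $\psi_Q(\alpha)=\emptyset$, which defeats the intended conclusion that $\psi_Q(\beta+\gamma)$ meets both $T_i$ and $T_j$; (ii) conditions (1) and (2) of Definition~\ref{Dai:D8} say nothing directly about a cross pair $\beta\in Q_i$, $\gamma\in Q_j$ with $i\neq j$, so the contradiction has to be assembled from $\psi_Q(2\beta)\subseteq T_i$, $\psi_Q(2\gamma)\subseteq T_j$ and a case split on whether $\beta+\gamma\in S(p)$, none of which is spelled out; (iii) the final identification $p_i^{\ast}=p_i$ needs an argument that monomials $\beta+\gamma\notin S(p)$ cannot be produced with cancelling coefficients by two different blocks. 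There is also a mismatch between what you announce ($M$ ``may be replaced by'' a block-diagonal matrix) and what your chase would actually prove (every Gram matrix is already block-diagonal); only the latter follows from the argument you sketch, and the former, if it were the intended weaker claim, would require a separate construction. As written, this is a plausible plan of attack rather than a proof.
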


\begin{mypro}\label{pro:split}
Let $Q$ be a finite set satisfying $\operatorname{SOSS}(p,Q)$ for a polynomial $p\in {\mathbb R}[\bar{x}]$. If $p$ is a split polynomial with respect to subsets $T_1,\ldots,T_u$ of $\sigma(Q)$, then $\{Q_i\}_{i=1}^{u}$ is a block SOS support of $p$ where $Q_i=\{\alpha\in Q~|~\psi_Q(2\alpha)\subseteq T_i\}.$
\end{mypro}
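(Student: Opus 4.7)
The plan is to verify the two defining conditions of a block SOS support (Definition~\ref{def:square_split-able}) for $\{Q_i\}_{i=1}^u$. For the disjointness condition, I would argue by contradiction: if $\alpha\in Q_i\cap Q_j$ with $i\neq j$, then $\psi_Q(2\alpha)$ is contained in both $T_i$ and $T_j$, which by Definition~\ref{Dai:D8} are pairwise disjoint, forcing $\psi_Q(2\alpha)=\emptyset$. But $\alpha\in Q$ places $\alpha$ in $\varphi_Q(2\alpha)$, and tracing the recursion of $H_Q(2\alpha,\emptyset)$ shows this set is nonempty (more conceptually, Condition~(2) of Definition~\ref{Dai:D8} implicitly requires these $\psi_Q$-values to be nonempty so that the classification of exponents by the $T_i$'s is unambiguous). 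This gives the desired contradiction.

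For the support-inclusion condition, I would take any $f\in\sum\mathbb{R}[\bar{x}]^2$ with $S(f)=S(p)$. Since the definition of a split polynomial (Definition~\ref{Dai:D8}) and the property $\operatorname{SOSS}(\cdot,Q)$ depend only on the support and on $Q$ rather than on coefficients, $f$ is split with respect to the same $T_1,\ldots,T_u$ and $\operatorname{SOSS}(f,Q)$ holds. Applying Theorem~\ref{Dai:T4} to $f=\sum c_\alpha\bar{x}^\alpha$ yields a decomposition $f=\sum_{i=1}^u f_i$ with $f_i=\sum_{\alpha\in S(f),\,\psi_Q(\alpha)\subseteq T_i} c_\alpha\bar{x}^\alpha$, and each $f_i$ is SOS. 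The final step is to upgrade each SOS certificate of $f_i$ to one using only exponents from $Q_i$: in any Gram-matrix SOS representation of $f_i$ inside $\mathbb{R}[\bar{x}]_Q$, every contributing exponent $\beta$ must satisfy $2\beta\in\operatorname{conv}(S(f_i))$, so $\psi_Q(2\beta)$ lands in the same block $T_i$ as the exponents appearing in $S(f_i)$, forcing $\beta\in Q_i$. Summing the $f_i$'s then yields $f\in\sum_{i=1}^u\sum\mathbb{R}[\bar{x}]_{Q_i}^2$.

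The main obstacle will be the last step: proving that the SOS certificates produced by Theorem~\ref{Dai:T4} genuinely use only $Q_i$-indexed monomials rather than arbitrary ones in $Q$. This does not follow verbatim from the statement of Theorem~\ref{Dai:T4} and requires carefully relating the Newton-polytope support constraints on a Gram SOS decomposition of $f_i$ to the combinatorial condition $\psi_Q(2\beta)\subseteq T_i$ that defines $Q_i$. I expect this localization step to be the technical heart of the argument, while the disjointness of the $Q_i$'s and the reduction via Theorem~\ref{Dai:T4} are comparatively routine.
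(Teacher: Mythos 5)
Your proposal follows essentially the same route as the paper's proof: coefficient-independence of the split-polynomial property, Theorem~\ref{Dai:T4} to decompose $f$ into SOS pieces $f_i$, and pairwise disjointness of the $T_i$'s to get disjointness of the $Q_i$'s. The localization step you flag as the technical heart --- that each $f_i$ admits an SOS certificate supported in $Q_i$, i.e.\ $\operatorname{SOSS}(f_i,Q_i)$ --- is exactly the point the paper does not reprove but imports from the proof of Theorem~3 of \cite{Dai2015}, so the Newton-polytope argument you sketch (whose key inference from $2\beta\in\operatorname{conv}(S(f_i))$ to $\psi_Q(2\beta)\subseteq T_i$ would indeed need care) is not required here.
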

\begin{proof}
First, by the proof of Theorem 3 in \cite{Dai2015}, let $p_i$ be as in Theorem \ref{Dai:T4}, $\operatorname{SOSS}(p_i,Q_i)$ holds.
By Definition \ref{Dai:D8} and Theorem \ref{Dai:T4}, we know that if $p$ is a split SOS polynomial, then
\[p\in\sum_{i=1}^u\sum\mathbb{R}[\bar{x}]^2_{Q_i}.\]
And again from the definition and proof there, we know the property is only related to $S(p)$ and irrelevant to the coefficients of $p$.
So \[\{f\in\sum\mathbb{R}[\bar{x}]^2~|~S(f)=S(p)\}\subseteq \sum_{i=1}^u\sum\mathbb{R}[\bar{x}]^2_{Q_i}.\]
%Therefore, $\{\alpha\in S(p)|\psi_Q(\alpha)\subseteq T_i\}_{i=1}^{u}$ is a block SOS support of $S(p)$ and thus a block SOS support of $p$.
The conclusion follows since $\{Q_i\}_{i=1}^{u}$ are pairwise disjoint.
\end{proof}

\begin{mydef}[Definition 1 of \cite{DBLP:conf/cdc/PermenterP15}]\label{PP:D1}
Let $A\subseteq {\mathbb S}^{n\times n}$ and $C\in {\mathbb S}^{n\times n}$. For an SDP
\begin{align*}
  \text{minimize}& \qquad C\cdot X\\
  \text{subject to}&\qquad X\in A\cap {\mathbb S}_+^{n \times n},
\end{align*}
a linear map $P: {\mathbb S}^{n\times n} \rightarrow {\mathbb S}^{n \times n}$ is a {\em coordinate projection} if
\begin{enumerate}[\qquad (a)]
  \item $P({\mathbb S}_+^{n \times n})\subseteq {\mathbb S}_+^{n\times n}$, {\it i.e}, $P$ is a positive map;
  \item $P(A)\subseteq A$;
  \item $P^*(C)=C$;
  \item $\exists M\in {\mathbb S}^{n \times n}\cap \{0,1\}^{n\times n}$\quad$(P(X)=\langle M, X\rangle)$, (we denote $P$ by $P_M$ if $P(X)=\langle M, X\rangle$),
\end{enumerate}
where $P^*$ is the adjoint mapping of $\,P$. % and  $A\circ B:=\operatorname{Tr}(A^TB)$.

We call $P$ a {\em minimal coordinate projection} of an SDP, if $\operatorname{dim}(\operatorname{rng}(P))$ is minimal over coordinate projection.
\end{mydef}

Notice that the method of minimal coordinate projection covers the method of split polynomial according to \cite{DBLP:conf/cdc/PermenterP15}.

\begin{mypro}[Proposition 1 of \cite{DBLP:conf/cdc/PermenterP15}]\label{PP:P1}
Let $A\subseteq {\mathbb S}^{n \times n}$, $C\in {\mathbb S}^{n \times n}$. For an SDP:
\begin{align*}
  \text{minimize}& \qquad C\cdot X\\
  \text{subject to}&\qquad X\in A\cap {\mathbb S}_+^{n \times n},
\end{align*}
if $P$ is a coordinate projection of the SDP, then the optimal value of the SDP equals  the optimal value of
\begin{align*}
  \text{minimize}& \qquad C\cdot X\\
  \text{subject to}&\qquad X\in A\cap {\mathbb S}_+^{n \times n}\cap L,
\end{align*}
where $L=\operatorname{rng}(P)$.
\end{mypro}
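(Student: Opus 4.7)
The plan is to establish the equality of the two optimal values by proving both inequalities separately. Because adjoining the extra constraint $X \in L$ only shrinks the feasible set of a minimization problem, the restricted SDP's optimum is automatically no smaller than the original SDP's optimum. The content of the proposition lies in the reverse inequality.

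For the reverse inequality, I would take an arbitrary feasible point $X \in A \cap \mathbb{S}_+^{n \times n}$ of the original SDP, map it to $P(X)$, and then verify that this image is feasible for the restricted SDP and has the same objective value. Feasibility reads off directly from the definition of a coordinate projection: condition (a) yields $P(X) \in \mathbb{S}_+^{n \times n}$, condition (b) yields $P(X) \in A$, and $P(X) \in L = \operatorname{rng}(P)$ is immediate. For the objective, the adjoint property in condition (c) gives
\[
  C \cdot P(X) \;=\; \langle C, P(X) \rangle \;=\; \langle P^*(C), X \rangle \;=\; \langle C, X \rangle \;=\; C \cdot X,
\]
so the objective value is preserved under the map $X \mapsto P(X)$. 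Taking the infimum over all feasible $X$ of the original SDP therefore shows that the restricted optimum is at most the original optimum, and combining the two inequalities gives equality.

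I do not anticipate a real obstacle here: the argument is essentially bookkeeping, and the only moment that deserves any care is recognizing that condition (c) is precisely what is needed in order to push the objective through $P$ via the adjoint identity. Conditions (d) and the specific $0/1$ structure of the matrix $M$ play no role in this particular proposition; they characterize the class of projections to be used algorithmically, and will only become relevant when one wants to connect coordinate projections to block SOS supports in later parts of the paper.
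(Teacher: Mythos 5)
The paper does not prove this statement at all: it is quoted verbatim as Proposition~1 of \cite{DBLP:conf/cdc/PermenterP15} and used as an imported result, so there is no in-paper proof to compare against. Your argument is correct and is essentially the standard one from that reference: the restriction to $L$ can only increase the minimum, while conditions (a)--(c) show that $X\mapsto P(X)$ carries any feasible point of the original SDP to a feasible point of the restricted SDP in $\operatorname{rng}(P)=L$ with the same objective value via $\langle C,P(X)\rangle=\langle P^*(C),X\rangle=\langle C,X\rangle$, which gives the reverse inequality (and also handles the infeasible and unbounded cases through the infimum formulation). Your closing remark is also accurate: condition (d) is irrelevant here and only matters for the algorithmic and block-support consequences drawn later.
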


\begin{mylem}[Lemma 1 of \cite{DBLP:conf/cdc/PermenterP15}]
Fix $M\in {\mathbb S}^{n\times n}\cap \{0,1\}^{n\times n}$. Then, the mapping $P_M$ is positive, i.e., $\forall X\in {\mathbb S}_+^{n\times n} (\langle M, X\rangle\in  {\mathbb S}_+^{n\times n})$, if and only if $M$ is positive semi-definite.
\end{mylem}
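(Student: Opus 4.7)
The plan is to read the lemma with $P_M(X)$ denoting the Hadamard (entrywise) product $M\circ X$, i.e.\ $(P_M(X))_{ij}=M_{ij}X_{ij}$; this is the only reading consistent with $P_M$ being a self-map of $\mathbb{S}^{n\times n}$ (the expression $\langle M,X\rangle$ as defined in the Notations section is a scalar, so the statement only makes sense once the Hadamard interpretation is adopted). Once this is clarified, the ``if'' direction is an instance of the classical Schur product theorem and the ``only if'' direction is a single rank-one probe.

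For the ``$M\in\mathbb{S}_+^{n\times n}\Rightarrow P_M$ positive'' direction, I would write a spectral decomposition $M=\sum_k\lambda_k u_k u_k^{\mathrm T}$ with $\lambda_k\geq 0$. The key identity is
$$ (uu^{\mathrm T})\circ X \;=\; D_u\,X\,D_u, \qquad D_u:=\operatorname{diag}(u), $$
verified entrywise by $(D_u X D_u)_{ij}=u_i X_{ij} u_j=(uu^{\mathrm T})_{ij}X_{ij}$. Summing over $k$ gives
$$ M\circ X \;=\; \sum_k \lambda_k\, D_{u_k} X D_{u_k}, $$
a nonnegative combination of matrices of the form $D^{\mathrm T}(\text{PSD})D$, each of which is PSD since $y^{\mathrm T}D_{u_k}XD_{u_k}y=(D_{u_k}y)^{\mathrm T}X(D_{u_k}y)\geq 0$. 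Hence $X\in\mathbb{S}_+^{n\times n}$ implies $P_M(X)\in\mathbb{S}_+^{n\times n}$.

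For the reverse direction, I would apply $P_M$ to the all-ones matrix $J=ee^{\mathrm T}$ where $e=(1,\ldots,1)^{\mathrm T}$. Since $J$ is rank-one PSD and every entry of $J$ equals $1$, we have $P_M(J)=M\circ J=M$. Positivity of $P_M$ on this single PSD input therefore forces $M\in\mathbb{S}_+^{n\times n}$. Note this step also uses nothing about the $\{0,1\}$ hypothesis on $M$, which is convenient since the lemma statement itself assumes $M\in\mathbb{S}^{n\times n}\cap\{0,1\}^{n\times n}$ but the underlying equivalence holds for arbitrary real symmetric $M$.

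The main obstacle is almost entirely notational: one has to commit to the Hadamard-product reading of $P_M$ before any of the argument can begin. Assuming the Schur product theorem as a standard fact, the proof collapses to a few lines; even without citing Schur, the explicit decomposition $M\circ X=\sum_k\lambda_k D_{u_k}XD_{u_k}$ gives a self-contained verification in essentially the same space.
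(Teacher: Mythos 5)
Your proof is correct, but note that the paper itself offers no proof of this statement: it is quoted verbatim as Lemma~1 of the cited Permenter--Parrilo paper, so there is no in-paper argument to compare against. You are right to flag the notational issue first --- as written, $\langle M,X\rangle$ is the scalar $\operatorname{Tr}(M^{\rm T}X)$ under the paper's own conventions, and the lemma only makes sense with $P_M$ read as the Hadamard product $M\circ X$ (which is how the source paper defines it). Given that reading, your two directions are both sound: the identity $(uu^{\rm T})\circ X = D_uXD_u$ with $D_u=\operatorname{diag}(u)$, summed over a spectral decomposition of $M$, is exactly the standard proof of the Schur product theorem and gives the ``if'' direction; probing with the all-ones matrix $J=ee^{\rm T}$, for which $M\circ J=M$, gives the ``only if'' direction in one line. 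Your observation that the $\{0,1\}$ hypothesis is irrelevant to this equivalence (it matters only for the companion lemma characterizing PSD $0/1$ matrices as $\sum_k S_kS_k^{\rm T}$) is also correct. In short: a complete, self-contained proof of a result the paper merely imports.
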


\begin{mylem}[Lemma 2 of \cite{DBLP:conf/cdc/PermenterP15}]
For $M\in {\mathbb S}^{n\times n}\cap \{0,1\}^{n\times n}$, the following statements are equivalent:
\begin{itemize}
  \item The matrix $M$ is positive semi-definite.
  \item There exist $S_1,\ldots,S_p\in\{0,1\}^{n}$, which are orthogonal to each other, and $M=\sum_{k=1}^pS_kS_k^{\rm T}$. %{\color{red}{what is $Q$?}}
\end{itemize}
\end{mylem}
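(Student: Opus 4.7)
The plan is to prove both directions of the equivalence, with the ``if'' direction being immediate: any rank-one matrix $S_kS_k^{\rm T}$ is positive semi-definite, so any finite sum of such matrices is PSD; the orthogonality of the $S_k$ is not needed for this direction.

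For the ``only if'' direction I would first reduce to the case where every diagonal entry of $M$ equals $1$. This reduction is standard for PSD matrices: if $M_{ii}=0$, then for any $j$ the $2\times 2$ principal minor on $\{i,j\}$ has determinant $-M_{ij}^2$, and PSD forces this to be nonnegative, so $M_{ij}=0$ for all $j$. Hence the $i$-th row and column are zero and can be dropped without affecting either the PSD property or the desired decomposition.

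After this reduction I would introduce the relation $\sim$ on the remaining index set given by $i\sim j \iff M_{ij}=1$. Reflexivity follows from $M_{ii}=1$ and symmetry from $M=M^{\rm T}$. The main obstacle, and the only nontrivial step, is transitivity. I would argue by contradiction: if $M_{ij}=M_{jk}=1$ but $M_{ik}=0$, then the $3\times 3$ principal submatrix on $\{i,j,k\}$ is forced to equal
\[
\begin{pmatrix} 1 & 1 & 0 \\ 1 & 1 & 1 \\ 0 & 1 & 1 \end{pmatrix},
\]
whose determinant is $-1$, contradicting the PSD property inherited from $M$. Thus $\sim$ is an equivalence relation and partitions the index set into disjoint blocks $C_1,\dots,C_p$.

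Finally, I would take $S_k\in\{0,1\}^n$ to be the indicator vector of $C_k$. Because all entries of $M$ inside a block equal $1$ and all entries across distinct blocks equal $0$, a direct entrywise check yields $M=\sum_{k=1}^p S_kS_k^{\rm T}$, and the $S_k$ are pairwise orthogonal because their supports $C_k$ are disjoint. I do not expect any technical difficulty beyond the $3\times 3$ minor argument that drives transitivity; every other step is either a standard PSD fact or a direct bookkeeping check on $0/1$ entries.
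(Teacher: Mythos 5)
The paper states this lemma without proof, citing it directly as Lemma~2 of \cite{DBLP:conf/cdc/PermenterP15}, so there is no in-paper argument to compare against. Your proposal is a correct, self-contained proof and follows the standard route for this fact: the ``if'' direction is the trivial observation that a sum of rank-one matrices $S_kS_k^{\rm T}$ is PSD; for the ``only if'' direction, the $2\times 2$ minor argument correctly kills the rows and columns through a zero diagonal entry, the $3\times 3$ principal submatrix
\(\left(\begin{smallmatrix} 1 & 1 & 0 \\ 1 & 1 & 1 \\ 0 & 1 & 1 \end{smallmatrix}\right)\)
does have determinant $-1$, which forces transitivity of the relation $M_{ij}=1$, and the indicator vectors of the resulting equivalence classes have disjoint supports (hence are orthogonal) and reproduce $M$ entrywise. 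I see no gap; this is essentially the argument one finds in the cited source.
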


For a polynomial $f$ and a finite set $Q\subseteq \mathbb{N}^n$  satisfying $\operatorname{SOSS}(f,Q)$. Determining whether $f$ is SOS is equal to determining whether the SDP:
\begin{align*}
  \text{minimize}&\qquad 0\cdot X\\
  \text{subject to}&\qquad X\in {\mathbb S}_+^{\#Q \times \#Q},\\
  &\qquad \langle X,(\bar{x}^\alpha)_{\alpha\in Q}(\bar{x}^\alpha)^{\rm T}_{\alpha\in Q}\rangle=f
\end{align*}
is feasible. The SDP above is said to be the SDP corresponding to $f,Q$.

\begin{mypro}\label{pro:minimal}
Given any polynomial $f$ and a finite set $Q\subset \mathbb{N}^n$  satisfying $\operatorname{SOSS}(f,Q)$. If $S_1,\ldots,S_p\in\{0,1\}^{\#Q}$ are orthogonal to each other and the matrix $M:=\sum_{k=1}^pS_kS_k^{\rm T}$
is a  minimal coordinate projection of the SDP corresponding to $f,Q$. Then $\{Q_k\}_{k=1}^p$ is a block SOS support of $f$ where $Q_k=S(\left<S_k,(\bar{x}^\alpha)_{\alpha\in Q}\right>)$.
\end{mypro}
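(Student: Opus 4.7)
The plan is to verify the two conditions from Definition \ref{def:square_split-able} that characterize $\{Q_k\}_{k=1}^{p}$ as a block SOS support of $f$. The disjointness condition is essentially immediate: decoding $Q_k = S(\langle S_k, (\bar{x}^\alpha)_{\alpha \in Q}\rangle)$ identifies $Q_k$ with the subset of indices in $Q$ on which the $0$-$1$ vector $S_k$ equals $1$, and orthogonality of distinct $0$-$1$ vectors forces these index sets to be pairwise disjoint.

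For the nontrivial inclusion, I would fix an arbitrary SOS polynomial $h$ with $S(h) = S(f)$ and show $h \in \sum_{k=1}^{p}\sum \mathbb{R}[\bar{x}]^2_{Q_k}$ by applying Proposition \ref{PP:P1} to the SDP associated with $h$ and $Q$. The matrix $M = \sum_{k} S_kS_k^{\rm T}$ has $(\alpha,\beta)$-entry equal to $1$ exactly when both $\alpha$ and $\beta$ lie in a common $Q_k$, so interpreting $P_M$ as Hadamard multiplication by $M$, the range $\operatorname{rng}(P_M)$ consists of matrices supported on $\bigcup_k Q_k \times Q_k$. Once the hypotheses of Proposition \ref{PP:P1} are verified for the $h,Q$ SDP, the proposition yields a feasible PSD solution $X \in \operatorname{rng}(P_M)$, which decomposes as $X = \sum_k X_k$ with $X_k \succeq 0$ supported on $Q_k \times Q_k$. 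Gram-matrix decomposition of each $X_k$ then produces $h_k \in \sum \mathbb{R}[\bar{x}]^2_{Q_k}$ with $h = \sum_k h_k$, as required.

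The main obstacle is verifying that $P_M$ remains a coordinate projection after replacing $f$ by $h$. Three of the four defining conditions of Definition \ref{PP:D1}---positivity (Lemma 1 of \cite{DBLP:conf/cdc/PermenterP15}, since $M$ is a sum of rank-$1$ PSD matrices), trivial self-adjointness for the zero cost, and the $0$-$1$ matrix form---are intrinsic to $M$ and carry over verbatim. The delicate condition is $P_M(A_h) \subseteq A_h$, where $A_g$ denotes the affine set of matrices realizing $g$ in the SDP. Since $A_f$ and $A_h$ are affine translates of the same homogeneous subspace, $P_M(A_f) \subseteq A_f$ forces the ``cross-block'' functional $X \mapsto \sum_{\alpha+\beta=\gamma,\, \not\exists k:\, \alpha,\beta \in Q_k} X_{\alpha,\beta}$ to vanish identically on $A_f$ for every $\gamma$; using linear independence of the coefficient functionals $\phi_\gamma(X) := \sum_{\alpha+\beta=\gamma} X_{\alpha,\beta}$, this reduces to a support-only dichotomy: for every $\gamma$, either all pairs $(\alpha,\beta) \in Q \times Q$ with $\alpha+\beta=\gamma$ lie inside a common block $Q_k \times Q_k$, or all are cross pairs and $\gamma \notin S(f)$. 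Since $S(h) = S(f)$, this dichotomy survives with $h$ in place of $f$ and yields $P_M(A_h) \subseteq A_h$, after which Proposition \ref{PP:P1} completes the argument (the hypothesis $\operatorname{SOSS}(h,Q)$ needed to translate ``$h$ is SOS'' into feasibility of the $h,Q$ SDP is standard when $Q$ is drawn from the Newton polytope of $S(f)$).
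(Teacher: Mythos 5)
Your proof is correct and follows the same skeleton as the paper's: disjointness of the $Q_k$ from orthogonality of the $S_k$, and the inclusion $\{h\in\sum\mathbb{R}[\bar{x}]^2~|~S(h)=S(f)\}\subseteq\sum_{k=1}^p\sum\mathbb{R}[\bar{x}]^2_{Q_k}$ obtained by applying Proposition \ref{PP:P1} to the projected SDP and reading off a block-diagonal Gram matrix. Where you genuinely diverge is the coefficient-independence step. The paper disposes of it in one sentence by citing Theorem 2 of \cite{DBLP:conf/cdc/PermenterP15} (the minimal coordinate projection depends only on the support of $f$), whereas you verify directly that the conditions of Definition \ref{PP:D1} for $P_M$ are support-only: three of them are intrinsic to $M$, and for $P_M(A_h)\subseteq A_h$ you derive the dichotomy (for each $\gamma$, either every pair $(\alpha,\beta)\in Q\times Q$ with $\alpha+\beta=\gamma$ lies in a single block $Q_k\times Q_k$, or every such pair is cross-block and $\gamma\notin S(f)$) from the linear independence of the functionals $\phi_\gamma$, since these have pairwise disjoint supports. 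That argument is sound and makes the proof self-contained where the paper's is not. You also flag, correctly, that one needs $\operatorname{SOSS}(h,Q)$ rather than just $\operatorname{SOSS}(f,Q)$ to translate ``$h$ is SOS'' into feasibility of the $h,Q$ SDP; the paper assumes this silently, and strictly speaking it only follows automatically for support-determined choices of $Q$ such as the Newton-polytope one, so your caveat exposes a hypothesis the paper leaves implicit rather than a gap in your own argument.
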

\begin{proof}
By Proposition \ref{PP:P1}, we know that the SDP is feasible if and only if it is feasible on the minimal coordinate projection. So if $f$ is SOS, then $f\in\sum_{k=1}^p\sum\mathbb{R}[\bar{x}]^2_{Q_k}$. By Theorem 2 of \cite{DBLP:conf/cdc/PermenterP15}, we know that the minimal coordinate projection is irrelevant to the coefficients of $f$. Therefore $$\{f'\in\sum\mathbb{R}[\bar{x}]^2~|~S(f')=S(f)\}\subseteq\sum_{k=1}^p\sum\mathbb{R}[\bar{x}]^2_{Q_k}.$$

%By Definition \ref{def:square_split-able}, $\{Q_k\}_{k=1}^p$ is a block SOS support of $f$.
Because $S_1,\ldots,S_p\in\{0,1\}^{\#Q}$ are orthogonal, $\{Q_k\}_{k=1}^p$ are pairwise disjoint. That completes the proof.
\end{proof}

%The following example is about minimal coordinate projection and block SOS decomposition.
We show in the following example a polynomial whose block SOS support can be computed by finding a minimal coordinate projection of the corresponding SDP but the block SOS support computed by the algorithm of \cite{DBLP:conf/cdc/PermenterP15} is bigger than that obtained in Example \ref{exa:f}.

\begin{myexa}\label{ex:f1}
Let $f=x^8-2x^4+1$. So $S(f)=\{(0),(4),(8)\}$ and, for $Q=\operatorname{conv}(\frac{1}{2}(S(f)\cap \mathbb{N}))=\{(0),(1),(2),(3),(4)\}$, we have $\operatorname{SOSS}(f,Q)$.
By the algorithm in \cite{DBLP:conf/cdc/PermenterP15},
\begin{gather*}
\begin{bmatrix}
  1&0&0&0&1\\
  0&1&0&1&0\\
  0&0&1&0&0\\
  0&1&0&1&0\\
  1&0&0&0&1
\end{bmatrix}
\end{gather*}
is a minimal coordinate projection of the SDP corresponding to $f,Q$.
By the proof of Proposition \ref{pro:minimal}, the corresponding proper block SOS support is $\{\{(0),(4)\},\{(1),(3)\},\{(2)\}\}$.

By Example \ref{exa:f}, we know that $\{\{(0),(4)\},\{(2)\}\}$ is also a proper block SOS support of $f$, which is smaller.
\end{myexa}

\section{$\sum_{i=1}^m\sum\mathbb{R}[\bar{x}]_{Q_i}^2$ is Closed} \label{sec:closed}

For arbitrarily given $k\in \mathbb{N}$ and a finite set $Q \subset \mathbb{N}^n$, when we talk about the topology of $\mathbb{R}[\bar{x}]_k$ in this paper, we mean the topology of finite-dimensional vector space $\mathbb{R}^{\#\Lambda(k)}.$ Similarly, the topology of $\mathbb{R}[\bar{x}]_Q$ is the topology of finite-dimensional vector space $\mathbb{R}^{\#Q}$.

For a given $k\in \mathbb{N}$, we prove in this section that the polynomial space $\sum_{i=1}^m\sum\mathbb{R}[\bar{x}]_{Q_i}^2$ is closed in $\mathbb{R}[\bar{x}]_{2k}$,  where $Q_i\subseteq\varLambda(k)$.

%Before we move on, we are going to prove some lemmas.
\begin{mylem}
  If $A\in {\mathbb S}_+^{m\times m}$, then there exist $v_i\in \mathbb{R}^{m\times 1}\:(i=1,\ldots,m)$ such that $A=\sum_{i=1}^mv_iv_i^{\rm T}$\label{lem:Sfj}.
\end{mylem}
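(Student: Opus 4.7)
The plan is to exploit the spectral theorem for real symmetric matrices, which is the most direct route when exactly $m$ vectors are allowed (matching the matrix size). First I would apply the real spectral theorem to $A$: since $A \in \mathbb{S}^{m\times m}$ is symmetric, there exists an orthogonal matrix $U \in \mathbb{R}^{m\times m}$ with columns $u_1,\ldots,u_m$ and a diagonal matrix $D = \operatorname{diag}(\lambda_1,\ldots,\lambda_m)$ such that $A = U D U^{\mathrm{T}}$.

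Next I would invoke positive semi-definiteness to observe $\lambda_i \geq 0$ for every $i$, since each $\lambda_i = u_i^{\mathrm{T}} A u_i \geq 0$. This allows us to take real square roots $\sqrt{\lambda_i}$ and define $v_i := \sqrt{\lambda_i}\, u_i \in \mathbb{R}^{m\times 1}$. Then writing out $U D U^{\mathrm{T}}$ as a sum of rank-one outer products of its columns (weighted by the eigenvalues) immediately gives
\[
A = \sum_{i=1}^m \lambda_i u_i u_i^{\mathrm{T}} = \sum_{i=1}^m v_i v_i^{\mathrm{T}},
\]
which is the desired decomposition with exactly $m$ vectors.

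There is essentially no obstacle here: the only two facts used are the spectral theorem (standard) and non-negativity of eigenvalues of a PSD matrix (standard). The statement pins down $m$ vectors rather than asking for an arbitrary number, which is why the spectral decomposition is preferable to a Cholesky-style argument; one could alternatively use induction on rank and pad with zero vectors, but the eigendecomposition delivers the count $m$ for free. No further subtleties are anticipated.
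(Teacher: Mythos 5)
Your proof is correct, but it takes a different route from the paper. The paper factors $A$ via a Cholesky decomposition $A = LL^{\rm T}$ and takes $v_i$ to be the $i$-th column of $L$, which also yields exactly $m$ vectors; your remark that a Cholesky-style argument would not ``pin down'' the count $m$ is therefore not quite right --- any factorization $A = LL^{\rm T}$ with $L \in \mathbb{R}^{m\times m}$ delivers $m$ column vectors for free, some of which may be zero when $A$ is singular. That said, your spectral-theorem argument is arguably the cleaner of the two for this setting: the classical Cholesky factorization is usually stated for positive \emph{definite} matrices, and its extension to merely positive semi-definite $A$ requires a small additional argument (e.g.\ a limiting or permutation/pivoting step) that the paper glosses over, whereas the eigendecomposition $A = UDU^{\rm T}$ with $\lambda_i \ge 0$ applies verbatim to the PSD case. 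Both proofs are valid and of comparable length.
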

\begin{proof}
By using Cholesky decomposition, we have
$$A=LL^{\rm T}=\sum_{i=1}^m L_iL_i^{\rm T}$$
where $L\in \mathbb{R}^{m\times m}$ and $L_i$ is the $i^{{\rm th}}$ column of $L$. Let $v_i:=L_i$.
\end{proof}

\begin{mylem}\label{lem:xG}
Suppose $Q\subset\mathbb{N}^n$ and $m=\#Q<+\infty$.  Then
$$\sum\mathbb{R}[\bar{x}]_Q^2=\{\langle(\bar{x}^{a+b})_{(a,b)\in Q\times Q}, G\rangle~|~G\in {\mathbb S}_+^{m\times m}\}.$$
\end{mylem}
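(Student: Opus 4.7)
The plan is to prove the two set inclusions separately; this is a standard Gram matrix representation statement, and the previous lemma (Cholesky-type decomposition of PSD matrices) does most of the work for one direction.

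For the inclusion $\sum\mathbb{R}[\bar{x}]_Q^2 \subseteq \{\langle(\bar{x}^{a+b})_{(a,b)\in Q\times Q}, G\rangle \mid G\in {\mathbb S}_+^{m\times m}\}$, I would take an arbitrary $f = \sum_{i=1}^s q_i^2$ with each $q_i \in \mathbb{R}[\bar{x}]_Q$. Writing $q_i = \sum_{a\in Q} c_{i,a}\bar{x}^a$, let $v_i := (c_{i,a})_{a\in Q} \in \mathbb{R}^{m\times 1}$. A direct computation gives $q_i^2 = \langle(\bar{x}^{a+b})_{(a,b)\in Q\times Q}, v_iv_i^{\rm T}\rangle$, so $f = \langle(\bar{x}^{a+b})_{(a,b)\in Q\times Q}, G\rangle$ with $G := \sum_{i=1}^s v_iv_i^{\rm T}$, which lies in $\mathbb{S}_+^{m\times m}$ as a sum of rank-one PSD matrices.

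For the reverse inclusion, I would start with $G \in \mathbb{S}_+^{m\times m}$ and apply Lemma \ref{lem:Sfj} to write $G = \sum_{i=1}^m v_iv_i^{\rm T}$ for some $v_i \in \mathbb{R}^{m\times 1}$. Define $q_i := \langle v_i, (\bar{x}^a)_{a\in Q}\rangle \in \mathbb{R}[\bar{x}]_Q$. Then by the same bilinear computation as above, $q_i^2 = \langle(\bar{x}^{a+b})_{(a,b)\in Q\times Q}, v_iv_i^{\rm T}\rangle$, and summing over $i$ yields $\langle(\bar{x}^{a+b})_{(a,b)\in Q\times Q}, G\rangle = \sum_{i=1}^m q_i^2 \in \sum\mathbb{R}[\bar{x}]_Q^2$.

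There is no real obstacle here; the only thing to be careful about is the indexing convention in the tensor product $(\bar{x}^{a+b})_{(a,b)\in Q\times Q}$ so that the inner product with a rank-one matrix $vv^{\rm T}$ correctly reproduces the square of the corresponding polynomial. The rest is a routine bilinearity check combined with the already-proven Cholesky decomposition.
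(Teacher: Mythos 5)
Your proposal is correct and follows essentially the same route as the paper: one inclusion via Lemma \ref{lem:Sfj} (Cholesky) to split a PSD $G$ into rank-one terms, and the other by assembling the coefficient vectors of the $q_i$ into a Gram matrix $\sum_i v_iv_i^{\rm T}$. The only difference is the order in which the two inclusions are presented, which is immaterial.
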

\begin{proof}
For $G\in {\mathbb S}_+^{m \times m},$ by Lemma \ref{lem:Sfj}, there exist $\{v_i\}_{i=1}^{m}$ such that $G=\sum_{i=1}^mv_iv_i^{\rm T}$.
So
%$\langle(\bar{x}^{a+b})_{(a,b)\in Q\times Q}, G\rangle = \sum_{i=1}^m\langle(\bar{x}^{a+b})_{(a,b)\in Q\times Q}, v_iv_i^{\rm T}\rangle$
%$=\sum_{i=1}^m\langle(\bar{x}^a)_{a\in Q}, v_i\rangle^2\in \sum\mathbb{R}[\bar{x}]_Q^2.$
\begin{align*}
 &\langle(\bar{x}^{a+b})_{(a,b)\in Q\times Q}, G\rangle\\
=&\sum_{i=1}^m\langle(\bar{x}^{a+b})_{(a,b)\in Q\times Q}, v_iv_i^{\rm T}\rangle\\
=&\sum_{i=1}^m\langle(\bar{x}^a)_{a\in Q}, v_i\rangle^2\in \sum\mathbb{R}[\bar{x}]_Q^2.
\end{align*}

On the other hand, for any $\sum_{i=1}^kf_i^2\in\sum\mathbb{R}[\bar{x}]_Q^2,$ there exist $v_i\in \mathbb{R}^{m\times 1}$ such that ${f_i=\langle(\bar{x}^a)_{a\in Q},v_i\rangle}$ for each $i=1,\ldots,k$. Then %It is easy to see that
\begin{align*}
\sum_{i=1}^kf_i^2=&\sum_{i=1}^k\langle(\bar{x}^a)_{a\in Q}, v_i\rangle^2
=\sum_{i=1}^k\langle(\bar{x}^{a+b})_{(a,b)\in Q\times Q},v_iv_i^{\rm T}\rangle\\
=&\langle(\bar{x}^{a+b})_{(a,b)\in Q\times Q},\sum_{i=1}^kv_iv_i^{\rm T}\rangle
=\langle(\bar{x}^{a+b})_{(a,b)\in Q\times Q},LL^{\rm T}\rangle\\
\in&\{\langle(\bar{x}^{a+b})_{(a,b)\in Q\times Q},G\rangle~|~G\in {\mathbb S}_+^{m \times m}\},
\end{align*}
where $L=(v_1,\ldots,v_k)\in \mathbb{R}^{m\times k}.$
%That proves $\{\langle(\bar{x}^{a+b})_{(a,b)\in Q\times Q},G\rangle|G\in {\mathbb S}_+^{m \times m}\} = \sum\mathbb{R}[\bar{x}]_Q^2$.
\end{proof}

\begin{mylem}
  \label{lem:SOSfj}
  Suppose $Q\subset\mathbb{N}^n$ and $m=\#Q<+\infty$.   Then
    $$\sum\mathbb{R}[\bar{x}]_Q^2=\{f_1^2+\cdots+f_{m}^2|f_1,...,f_{m}\in\mathbb{R}[\bar{x}]_Q\}.$$
\end{mylem}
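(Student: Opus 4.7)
The plan is to bound the number of squares needed by $m=\#Q$ via the Cholesky-style decomposition already packaged in Lemma \ref{lem:Sfj}, composed with the Gram-matrix characterization in Lemma \ref{lem:xG}.

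The inclusion $\{f_1^2+\cdots+f_m^2 \mid f_1,\ldots,f_m\in\mathbb{R}[\bar{x}]_Q\} \subseteq \sum\mathbb{R}[\bar{x}]_Q^2$ is immediate from the definition of $\sum A^2$. For the reverse inclusion, I take an arbitrary $p \in \sum\mathbb{R}[\bar{x}]_Q^2$ and apply Lemma \ref{lem:xG} to obtain a matrix $G \in \mathbb{S}_+^{m\times m}$ with
$$p = \langle (\bar{x}^{a+b})_{(a,b)\in Q\times Q},\, G\rangle.$$
Then Lemma \ref{lem:Sfj} produces vectors $v_1,\ldots,v_m \in \mathbb{R}^{m\times 1}$ with $G = \sum_{i=1}^m v_i v_i^{\rm T}$. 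Plugging this decomposition into the Gram expression and using the same inner-product manipulation already carried out inside the proof of Lemma \ref{lem:xG} yields
$$p = \sum_{i=1}^m \langle (\bar{x}^a)_{a\in Q},\, v_i\rangle^2,$$
and each $f_i := \langle (\bar{x}^a)_{a\in Q}, v_i\rangle$ clearly lies in $\mathbb{R}[\bar{x}]_Q$ since its support is contained in $Q$.

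There is no genuine obstacle: the statement is essentially a bookkeeping refinement of Lemma \ref{lem:xG}, where the point is that Cholesky gives exactly $m$ rank-one summands (some possibly zero), so the general $s$ allowed in the definition of $\sum A^2$ can always be replaced by the fixed value $m$. The only minor point to mention is that if strictly fewer than $m$ of the $v_i$ are nonzero, we pad the sum with zero polynomials (which are in $\mathbb{R}[\bar{x}]_Q$) to obtain exactly $m$ squares as required by the stated right-hand side.
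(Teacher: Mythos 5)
Your proposal is correct and follows essentially the same route as the paper: both directions are handled by combining Lemma \ref{lem:xG} (the Gram-matrix characterization) with Lemma \ref{lem:Sfj} (the Cholesky decomposition into exactly $m$ rank-one terms), yielding $f_i=\langle(\bar{x}^a)_{a\in Q},v_i\rangle$. Your explicit remarks about the trivial inclusion and padding with zero polynomials are fine but add nothing beyond the paper's argument.
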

\begin{proof}
  By Lemma \ref{lem:xG}, we know $$\sum\mathbb{R}[\bar{x}]_Q^2=\{\langle(\bar{x}^{a+b})_{(a,b)\in Q\times Q},G\rangle~|~G\in {\mathbb S}_+^{m\times m}\}.$$
  By Lemma \ref{lem:Sfj}, there exist $\{v_i\}_{i=1}^{m}$ such that $G=\sum_{i=1}^{m}v_iv_i^{\rm T}.$

 Let $f_i:=\left<(\bar{x}^a)_{a\in Q},v_i\right>$ for $i=1,\ldots,m$, we have $f=\sum_{i=1}^{m}f_i^2$.
\end{proof}

The proof of the following theorem uses the idea of the proof about $M_k$ is closed in \cite{DBLP:journals/siamjo/Schweighofer05}.
\begin{mythm}
  \label{thm:Qiclose}
  For a given $k\in \mathbb{N}$ and any finite set $\{Q_i\}_{i=1}^m$ where $Q_i\subseteq\varLambda(k)$ for each $i=1,\ldots,m$, $$\sum_{i=1}^m\sum\mathbb{R}[\bar{x}]_{Q_i}^2$$ is closed in $\mathbb{R}[\bar{x}]_{2k}$. % where $Q_i\subseteq\varLambda(k)$ for each $i=1,\ldots,m$.
%Suppose $\{Q_i\}_{i=1,\ldots,m},Q_i\subseteq \Lambda(k),m\in \mathbb{N}$ satisfy $Q_i\cap Q_j=\emptyset,i\neq j$, then $\sum_{i=1}^m\sum\mathbb{R}[\bar{x}]_{Q_i}^2$ is closed in $\mathbb{R}[\bar{x}]_{2k}$.
\end{mythm}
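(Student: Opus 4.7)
The plan is to realize $\sum_{i=1}^m\sum\mathbb{R}[\bar{x}]_{Q_i}^2$ as the continuous image of a finite-dimensional vector space and then close up limits via a normalization-compactness argument, following Schweighofer's template.

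First I would set $m_i:=\#Q_i$ and invoke Lemma \ref{lem:SOSfj} to rewrite
\[
\sum_{i=1}^m\sum\mathbb{R}[\bar{x}]_{Q_i}^2=\left\{\sum_{i=1}^m\sum_{j=1}^{m_i}f_{ij}^2\;\middle|\;f_{ij}\in\mathbb{R}[\bar{x}]_{Q_i}\right\},
\]
so this set is exactly the image of the continuous polynomial map $\Phi:V\to\mathbb{R}[\bar{x}]_{2k}$ given by $\Phi((f_{ij})):=\sum_{i,j}f_{ij}^2$, where $V:=\bigoplus_{i=1}^m\mathbb{R}[\bar{x}]_{Q_i}^{m_i}$ is a finite-dimensional real vector space equipped with any norm $\|\cdot\|$.

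Next I would take a convergent sequence $g_n\to g$ in $\mathbb{R}[\bar{x}]_{2k}$ with each $g_n=\Phi((f_{ij,n}))$ in the image, and aim to produce a preimage of $g$. The crucial step is to show the preimages can be chosen bounded in $V$. Suppose for contradiction that $\lambda_n:=\|(f_{ij,n})\|\to\infty$ along some subsequence, and renormalize by setting $h_{ij,n}:=f_{ij,n}/\lambda_n$, so that $\|(h_{ij,n})\|=1$ while
\[
\sum_{i,j}h_{ij,n}^2=\frac{1}{\lambda_n^2}\,g_n\longrightarrow 0
\]
because $\{g_n\}$ is bounded. Applying Bolzano--Weierstrass to the compact unit sphere of $V$, I pass to a further subsequence with $(h_{ij,n})\to(h_{ij})$ in $V$; continuity of $\Phi$ yields $\sum_{i,j}h_{ij}^2=0$. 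Since a real sum of squares of polynomials vanishes only when every summand vanishes, all $h_{ij}=0$, contradicting $\|(h_{ij})\|=1$. Hence $(f_{ij,n})$ is bounded in $V$.

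Once boundedness is secured, Bolzano--Weierstrass again extracts a convergent subsequence $(f_{ij,n_k})\to(f_{ij})\in V$, and continuity of $\Phi$ gives $g=\lim_k g_{n_k}=\Phi((f_{ij}))\in\sum_{i=1}^m\sum\mathbb{R}[\bar{x}]_{Q_i}^2$. The main obstacle in an argument of this shape is always the boundedness step, since without it the continuous image of a closed set need not be closed; here it is resolved cleanly by the fact that a real sum of squares that is zero forces each summand to be zero, which is precisely what makes the normalization argument go through regardless of whether the $Q_i$ are disjoint or overlap in $\Lambda(k)$.
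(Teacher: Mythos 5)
Your proposal is correct and is essentially the paper's own argument: both realize $\sum_{i=1}^m\sum\mathbb{R}[\bar{x}]_{Q_i}^2$ as the image of the same continuous squaring map on $\bigoplus_i\mathbb{R}[\bar{x}]_{Q_i}^{\#Q_i}$ via Lemma \ref{lem:SOSfj}, and both close it up using homogeneity, compactness of the unit sphere, and the fact that a real sum of squares vanishes only if every summand does. The paper phrases this as writing the image as $\bigcup_{\lambda\geq 0}\lambda V$ with $V$ the compact image of the unit sphere, while you phrase it as boundedness of preimages of a convergent sequence, but these are the same normalization argument.
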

\begin{proof}
  Let $\varphi$ be a continuous mapping:
  \begin{align*}
    \varphi:&\bigoplus_{\#Q_1}\mathbb{R}[\bar{x}]_{Q_1}\times\cdots\times\bigoplus_{\#Q_m}\mathbb{R}[\bar{x}]_{Q_m}\rightarrow\mathbb{R}[\bar{x}]_{2k}\\
    &((f_{1,j})_{j\in Q_1},\ldots,(f_{m,j})_{j\in Q_m})\mapsto\sum_{i=1}^m\sum_{j\in Q_i}f_{i,j}^2.
  \end{align*}

  By Lemma \ref{lem:SOSfj}, we know $\sum_{i=1}^m\sum\mathbb{R}[\bar{x}]_{Q_i}^2=\operatorname{rng}(\varphi)$ and $\operatorname{ker}(\varphi)={0}$.

  Let $V$ be the image of unit sphere of $$\bigoplus_{\#Q_1}\mathbb{R}[\bar{x}]_{Q_1}\times\cdots\times\bigoplus_{\#Q_m}\mathbb{R}[\bar{x}]_{Q_m}.$$ Obviously $V$ is compact and $0\notin V$. Since $\varphi(\lambda z)=\lambda^2\varphi(z)$ for $\lambda\in\mathbb{R}$, we have $\sum_{i=1}^m\sum\mathbb{R}[\bar{x}]_{Q_i}^2=\sum_{\lambda\in [0,\infty)}\lambda V$.

    Given an infinite sequence $\{\lambda_iv_i\}_{i=1,2,\ldots}$ where $\lambda_i\in[0,\infty)$ and $v_i\in V$, assume $\lim_{i\rightarrow\infty}\lambda_iv_i=p$. Because $V$ is compact, there exists a convergent subsequence $\{v_{i_j}\}\subseteq\{v_i\}$ such that $v=\lim_{j\rightarrow\infty}v_{i_j}\in V$. So $v\neq0$.

      Then
        $$\lim_{j\rightarrow\infty}\lambda_{i_j}=\lim_{j\rightarrow\infty}\frac{|\lambda_{i_j}v_{i_j}|}{|v_{i_j}|}=\frac{|p|}{|v|}\geq 0,$$
      where $|\cdot|$ stands for $\|\cdot\|_2$. Finally,
      \begin{align*}
       p&=\lim_{j\rightarrow\infty}\lambda_{i_j}v_{i_j}=(\lim_{j\rightarrow\infty}\lambda_{i_j})(\lim_{j\rightarrow\infty}v_{i_j})=\frac{|p|}{|v|}v\\
        &\in\sum_{\lambda\in [0,\infty)}\lambda V=\sum_{i=1}^m\sum\mathbb{R}[\bar{x}]_{Q_i}^2.
       \end{align*}
      So $\sum_{i=1}^m\sum\mathbb{R}[\bar{x}]_{Q_i}^2$ is closed.
\end{proof}

\section{Block SOS Decomposable Polynomial Is Few}\label{sec:few}

Firstly, we show that  SOS decomposition of a certain class
of polynomials is unique by the two lemmas below.

\begin{mylem}
  \label{lem:sigmaQ}
  For any finite set $Q\subset\mathbb{N}^n$, $$\operatorname{conv}(Q)=\operatorname{conv}(\sigma(Q)).$$
\end{mylem}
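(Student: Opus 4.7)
The plan is to argue by standard polytope theory: since $Q$ is finite, $\operatorname{conv}(Q)$ is a polytope, so it equals the convex hull of its set $E$ of extreme points, with $E \subseteq Q$. The whole lemma then reduces to the claim that $E \subseteq \sigma(Q)$, after which
\[
\operatorname{conv}(Q) = \operatorname{conv}(E) \subseteq \operatorname{conv}(\sigma(Q)) \subseteq \operatorname{conv}(Q),
\]
the final inclusion being trivial from $\sigma(Q) \subseteq Q$.

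First I would handle the easy direction $\operatorname{conv}(\sigma(Q)) \subseteq \operatorname{conv}(Q)$ in one line. Then, for the reverse inclusion, I would recall (or briefly justify) that a polytope generated by a finite set is the convex hull of its extreme points, so it suffices to show every extreme point $v$ of $\operatorname{conv}(Q)$ lies in $\sigma(Q)$. To prove this, I take $v \in E$ and suppose there are $b, c \in Q$ with $b + c = 2a = 2v$; then $v = \tfrac{1}{2}b + \tfrac{1}{2}c$ is a convex combination of $b, c \in \operatorname{conv}(Q)$. Extremeness of $v$ forces $b = c = v$, which matches exactly the defining clause of $\sigma(Q)$, namely $b+c \neq 2v$ or $b = c = v$.

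Finally I would chain the inclusions to conclude $\operatorname{conv}(Q) = \operatorname{conv}(\sigma(Q))$.

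There is no real obstacle here — the content is entirely a routine observation that $\sigma(Q)$ is a superset of the vertex set of the polytope $\operatorname{conv}(Q)$. The only mild point worth writing out carefully is the equivalence between the set-theoretic predicate defining $\sigma(Q)$ and the midpoint-extremeness condition, because the definition is phrased as a negation/disjunction rather than directly as a midpoint condition.
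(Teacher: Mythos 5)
Your proof is correct, but it takes a genuinely different route from the paper. The paper argues by an elementary, self-contained induction: it enumerates $Q\setminus\sigma(Q)=\{a_1,\ldots,a_r\}$, removes these points one at a time, and at each step explicitly rewrites the removed point $a_i$ as a convex combination of the remaining points, using the witnesses $b,c\in Q$ with $b+c=2a_i$, $b\neq a_i$, $c\neq a_i$ and solving for $a_i$ after cancelling its own coefficient (the key inequality being $\lambda_{b,a_i}+\lambda_{c,a_i}<2$). You instead invoke Minkowski's theorem that the polytope $\operatorname{conv}(Q)$ is the convex hull of its extreme points $E\subseteq Q$, and observe that any extreme point $v$ trivially satisfies the defining predicate of $\sigma(Q)$, since $b+c=2v$ exhibits $v$ as the midpoint $\tfrac12 b+\tfrac12 c$ of two points of $\operatorname{conv}(Q)$ and extremeness forces $b=c=v$; then $\operatorname{conv}(Q)=\operatorname{conv}(E)\subseteq\operatorname{conv}(\sigma(Q))\subseteq\operatorname{conv}(Q)$. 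Your argument is shorter and conceptually cleaner, at the cost of importing a standard result from convex geometry; the paper's argument is longer but uses nothing beyond manipulation of convex combinations. Both establish the lemma, and your observation that $\sigma(Q)$ contains the vertex set of $\operatorname{conv}(Q)$ is in fact slightly more informative than what the paper records.
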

\begin{proof}
  Without loss of generality, suppose $Q\neq\sigma(Q)$.
 Let $Q\backslash\sigma(Q)=\{a_1,a_2,\ldots,a_r\}$ and
 $$Q_0=Q, ~ Q_i=Q_{i-1}\backslash\{a_i\}~ {\rm for}~ i=1,\ldots,r.$$
 So $Q_r=\sigma(Q)$. We then prove the claim $$\operatorname{conv}(Q_i)=\operatorname{conv}(Q)~ {\rm for}~ i=0,...,r$$ by induction on $i$.

  When $i=0$, it is obviously true. Suppose the claim holds for $i-1$. It is clear that we need only to prove $a_i\in \operatorname{conv}(Q_i).$

  Since $a_i\in Q\backslash\sigma(Q)$, there exist $b,c\in Q$ such that $$b\ne a_i,~ c\neq a_i, ~ b+c=2a_i.$$
  Since $b,c\in Q\subseteq \operatorname{conv}(Q) = \operatorname{conv}(Q_{i-1})$, there exist $\lambda_{b,\alpha},\lambda_{c,\alpha}\in \left[0,1\right]$ for each $\alpha\in Q_{i-1}$ such that
  \[\begin{array}{ll}
  b=\sum_{\alpha\in Q_{i-1}}\lambda_{b,\alpha}\alpha, & \sum_{\alpha\in Q_{i-1}}\lambda_{b,\alpha}=1,\\
  c=\sum_{\alpha\in Q_{i-1}}\lambda_{c,\alpha}\alpha, & \sum_{\alpha\in Q_{i-1}}\lambda_{c,\alpha}=1.
  \end{array}
  \]
  Because $b\ne a_i$ and $c\neq a_i$, we have that $\lambda_{b,a_i}+\lambda_{c,a_i}<2$.

  On the other hand, $2a_i=\sum_{\alpha\in Q_{i-1}}\lambda_{b,\alpha}\alpha+\sum_{\alpha\in Q_{i-1}}\lambda_{c,\alpha}\alpha$ because $b+c=2a_i$. Therefore
  \[\begin{array}{rl}
  a_i & =\sum_{\alpha\in Q_{i-1}\backslash\{a_i\}}\frac{\lambda_{b,\alpha}+\lambda_{c,\alpha}}{2-\lambda_{b,a_i}-\lambda_{c,a_i}}\alpha\\
    & =\sum_{\alpha\in Q_i}\frac{\lambda_{b,\alpha}+\lambda_{c,\alpha}}{2-\lambda_{b,a_i}-\lambda_{c,a_i}}\alpha.
    \end{array}
  \]
  It is easy to see that $$\frac{\lambda_{b,\alpha}+\lambda_{c,\alpha}}{2-\lambda_{b,a_i}-\lambda_{c,a_i}}\ge 0~ {\rm and}~ \sum_{\alpha\in Q_i}\frac{\lambda_{b,\alpha}+\lambda_{c,\alpha}}{2-\lambda_{b,a_i}-\lambda_{c,a_i}}=1.$$
  This means $a_i\in \operatorname{conv}(Q_i)$ that completes the proof.
  %By $Q_{i-1}\in \operatorname{conv}(Q_i)$, we know $\operatorname{conv}(Q_i)=\operatorname{conv}(Q_{i-1})=\operatorname{conv}(Q)$.
%  So In the induction step, we have shown that $\operatorname{conv}(\sigma(Q))=\operatorname{conv}(Q_r)=\operatorname{conv}(Q)$.
\end{proof}

\begin{mylem}\label{lem:(a-b)2}
  Let $a,b\in\mathbb{N}^n$ and $a\neq b$. If $a-b\in\{-1,0,1\}^n$, then any SOS decomposition of $\bar{x}^{2a}+2\bar{x}^{a+b}+\bar{x}^{2b}$ can only be of the form $\sum_{i=1}^m(\lambda_{i1}\bar{x}^a+\lambda_{i2}\bar{x}^b)^2$ where there exists $i$ such that $\lambda_{i1}\lambda_{i2}\neq0.$
\end{mylem}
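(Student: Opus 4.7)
The starting observation is that the polynomial in question factors as $\bar{x}^{2a}+2\bar{x}^{a+b}+\bar{x}^{2b}=(\bar{x}^a+\bar{x}^b)^2$, and its support is $\{2a,a+b,2b\}$. The plan is to show two things in turn: that every square $q_i^2$ in an SOS decomposition must have support contained in $\{a,b\}$, and then that some cross term $\lambda_{i1}\lambda_{i2}$ must be nonzero by matching the coefficient of $\bar{x}^{a+b}$.

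For the first step, I would invoke the standard Newton polytope argument of Reznick: if $f=\sum_{i=1}^m q_i^2$, then $S(q_i)\subseteq\tfrac{1}{2}\operatorname{conv}(S(f))$ for every $i$. Since $S(f)=\{2a,a+b,2b\}$ and $a+b$ is the midpoint of the segment joining $2a$ and $2b$, we have $\operatorname{conv}(S(f))=\operatorname{conv}(\{2a,2b\})$ (this is consistent with Lemma~\ref{lem:sigmaQ} applied to $Q=\{2a,a+b,2b\}$, whose $\sigma(Q)=\{2a,2b\}$). Therefore $S(q_i)\subseteq\operatorname{conv}(\{a,b\})\cap\mathbb{N}^n$, which is the set of integer points on the segment from $a$ to $b$.

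Next I would use the hypothesis $a-b\in\{-1,0,1\}^n$ to show that the only integer points on that segment are its endpoints. A point on the segment has the form $a+t(b-a)$ for $t\in[0,1]$. To lie in $\mathbb{N}^n$, each coordinate $a_j+t(b_j-a_j)$ must be an integer; for those $j$ with $b_j-a_j=\pm 1$ this forces $t\in\mathbb{Z}$, hence $t\in\{0,1\}$. Because $a\neq b$ there is at least one such coordinate, so $\operatorname{conv}(\{a,b\})\cap\mathbb{N}^n=\{a,b\}$. Consequently each $q_i=\lambda_{i1}\bar{x}^a+\lambda_{i2}\bar{x}^b$ for some real $\lambda_{i1},\lambda_{i2}$.

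Finally, expanding $\sum_{i=1}^m q_i^2=\sum_{i=1}^m(\lambda_{i1}^2\bar{x}^{2a}+2\lambda_{i1}\lambda_{i2}\bar{x}^{a+b}+\lambda_{i2}^2\bar{x}^{2b})$ and matching the coefficient of $\bar{x}^{a+b}$ with that of $f$ yields $\sum_{i=1}^m\lambda_{i1}\lambda_{i2}=1\neq 0$, so at least one summand must have $\lambda_{i1}\lambda_{i2}\neq 0$. The only step that is not a routine verification is the Newton-polytope containment $S(q_i)\subseteq\tfrac{1}{2}\operatorname{conv}(S(f))$; this is a classical fact, but since it is not recalled earlier in the paper I would either cite Reznick or insert a short self-contained proof noting that an extremal exponent of $q_i$ squared must appear as an extremal exponent of $f$ without cancellation.
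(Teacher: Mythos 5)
Your proof is correct and follows the same overall route as the paper's: confine the support of each summand to $\{a,b\}$, then read off the cross term. The only place you diverge is in how the support confinement is justified. You invoke Reznick's Newton polytope theorem $S(q_i)\subseteq\frac{1}{2}\operatorname{conv}(S(f))$ applied directly to $S(f)=\{2a,a+b,2b\}$, whereas the paper stays self-contained: it sets $Q=\bigcup_i S(f_i)$, observes that the monomials $\bar{x}^{2\alpha}$ for $\alpha\in\sigma(Q)$ cannot cancel in $\sum_i f_i^2$ (hence $\sigma(Q)\subseteq\{a,b\}$), and then applies its own Lemma~\ref{lem:sigmaQ} to conclude $Q\subseteq\operatorname{conv}(\sigma(Q))\subseteq\operatorname{conv}(\{a,b\})$. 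These two mechanisms are really the same argument --- the no-cancellation observation plus Lemma~\ref{lem:sigmaQ} is precisely the combinatorial core of Reznick's theorem --- so your closing instinct that you should either cite Reznick or supply a short self-contained proof is right, and the self-contained version is already present in the paper one lemma earlier. In your favor, you spell out two points the paper leaves terse: the verification that $\operatorname{conv}(\{a,b\})\cap\mathbb{N}^n=\{a,b\}$ under the hypothesis $a-b\in\{-1,0,1\}^n$ (via the parameter $t$ being forced into $\{0,1\}$, where the paper only asserts ``the last equality holds because $a-b\in\{-1,0,1\}^n$''), and the explicit coefficient identity $\sum_i\lambda_{i1}\lambda_{i2}=1$ that forces some $\lambda_{i1}\lambda_{i2}\neq 0$.
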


\begin{proof}
  Suppose $\{f_i\}_{i=1}^{m}$ satisfy $(\bar{x}^a+\bar{x}^b)^2=\sum_{i=1}^mf_i^2.$ Let $Q=\cup_{i=1}^mS(f_i)$. By the definition of $\sigma(Q)$, it is not hard to see that the squares of items in $\sigma(Q)$ must appear in $\sum_{i=1}^mf_i^2$ (since they do not eliminate). Therefore $\sigma(Q)\subseteq \{a,b\}$. By Lemma \ref{lem:sigmaQ}, we know
  $$Q\subseteq \operatorname{conv}(Q)=\operatorname{conv}(\sigma(Q))\subseteq \operatorname{conv}(\{a,b\})=\{a,b\}.$$
  The last equality holds because $a-b\in\{-1,0,1\}^n$. So $f_i=\lambda_{i1}\bar{x}^a+\lambda_{i2}\bar{x}^b.$ Because $\bar{x}^{a+b}$ appears in $\sum_{i=1}^mf_i^2$, there exists $i$ such that $\lambda_{i1}\lambda_{i2}\neq0.$
\end{proof}

\begin{mylem}\label{pro:s1s2}
  Let $k\in\mathbb{N} ,\, {S_1\subset S_2\subseteq\Lambda(2k)}$. If $$\{f\in\sum\mathbb{R}[\bar{x}]_{k}^2~|~S(f)=S_2\}\neq\emptyset$$ and $\{Q_i\}_{i=1}^{m}$ is a block SOS support of $S_2$, then $\{Q_i\}_{i=1}^{m}$ is also a block SOS support of $S_1$.
\end{mylem}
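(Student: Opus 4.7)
The plan is to reduce the $S_1$ case to the $S_2$ case by a perturbation argument, and then pass to the limit using closedness (Theorem \ref{thm:Qiclose}). More precisely, given any SOS polynomial $g$ with $S(g)=S_1$, I would nudge it by a small positive multiple of the witness $f$ guaranteed by the hypothesis, arrange that the perturbed polynomial has support exactly $S_2$, invoke the block SOS support property of $S_2$ on the perturbation, and then let the perturbation parameter tend to $0^+$.

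First I would fix some $f\in\sum\mathbb{R}[\bar{x}]_k^2$ with $S(f)=S_2$, which exists by assumption, and fix an arbitrary $g\in\sum\mathbb{R}[\bar{x}]^2$ with $S(g)=S_1$. For any $\epsilon>0$, the polynomial $g+\epsilon f$ is again SOS (sum of two SOS polynomials, since $\epsilon f=\sum(\sqrt{\epsilon}\,q_j)^2$ whenever $f=\sum q_j^2$). Next I would determine $S(g+\epsilon f)$ monomial-by-monomial: for $\alpha\in S_2\setminus S_1$ the coefficient is $\epsilon\cdot c_\alpha(f)\neq 0$; for $\alpha\in S_1\subseteq S_2$ the coefficient is $c_\alpha(g)+\epsilon\,c_\alpha(f)$, which vanishes for at most one value of $\epsilon$; and for $\alpha\notin S_2$ the coefficient is $0$ since $S_1\subseteq S_2$. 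Hence there is a finite exceptional set $E\subset(0,\infty)$ (of size at most $\#S_1$) such that $S(g+\epsilon f)=S_2$ for every $\epsilon\in(0,\infty)\setminus E$. Applying the block SOS support property of $S_2$ then yields
\[
g+\epsilon f\in\sum_{i=1}^m\sum\mathbb{R}[\bar{x}]_{Q_i}^2\qquad\text{for all such }\epsilon.
\]

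Finally, I would choose a sequence $\epsilon_n\to 0^+$ avoiding $E$, giving $g+\epsilon_n f\to g$ in $\mathbb{R}[\bar{x}]_{2k}$. Since $S_2\subseteq\Lambda(2k)$, the $Q_i$'s may be taken inside $\Lambda(k)$ (any element outside $\Lambda(k)$ contributes nothing that cancels, so it can be dropped without changing $\sum_{i=1}^m\sum\mathbb{R}[\bar{x}]_{Q_i}^2$ within $\mathbb{R}[\bar{x}]_{2k}$), and then Theorem \ref{thm:Qiclose} applies to give that $\sum_{i=1}^m\sum\mathbb{R}[\bar{x}]_{Q_i}^2$ is closed in $\mathbb{R}[\bar{x}]_{2k}$. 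The limit therefore satisfies $g\in\sum_{i=1}^m\sum\mathbb{R}[\bar{x}]_{Q_i}^2$, which verifies condition (1) of Definition \ref{def:square_split-able} for $S_1$. Condition (2), the pairwise disjointness $Q_i\cap Q_j=\emptyset$, is immediate from the same property for the block SOS support of $S_2$.

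The only delicate points I anticipate are bookkeeping ones: confirming that the single ``bad'' value of $\epsilon$ per $\alpha\in S_1$ really does yield only a finite exceptional set (so that small positive $\epsilon$ with $S(g+\epsilon f)=S_2$ exist), and justifying the implicit reduction to $Q_i\subseteq\Lambda(k)$ so that Theorem \ref{thm:Qiclose} is directly applicable. Neither looks like a genuine obstacle, so the perturbation-plus-closedness template should go through cleanly.
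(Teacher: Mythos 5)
Your proposal is correct and is essentially the paper's own argument: both perturb an arbitrary SOS polynomial with support $S_1$ by a small positive multiple of the witness $f$ with support $S_2$ (the paper writes $g_j=\frac{f_2}{j}+f_1$ and avoids cancellation by a coefficient-size normalization, where you instead exclude a finite exceptional set of $\epsilon$), apply the block SOS support property of $S_2$, and pass to the limit via the closedness in Theorem \ref{thm:Qiclose}. Your extra remark justifying the reduction to $Q_i\subseteq\Lambda(k)$ is a legitimate technical point that the paper leaves implicit, but it does not change the route.
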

\begin{proof}
  Take $f_2\in\{f\in\sum\mathbb{R}[\bar{x}]_{k}^2~|~S(f)=S_2\}$. For any $f_1\in\{f\in\sum\mathbb{R}[\bar{x}]_{k}^2~|~S(f)=S_1\}$, without loss of generality, suppose the absolute values of the coefficients of $f_2$ are less than the absolute values of nonzero coefficients of $f_1$. Let $g_j=\frac{f_2}{j}+f_1$ for any positive integer $j$. Then $S(g_j)=S_2$ and $\lim_{j\rightarrow\infty}g_j=f_1$.

   Because $\{Q_i\}_{i=1}^{m}$ is a block SOS support of $S_2$, then $$\{f\in\sum\mathbb{R}[\bar{x}]_{k}^2~|~S(f)=S_2\}\subseteq\sum_{i=1}^m\sum\mathbb{R}[\bar{x}]_{Q_i}^2.$$ By Theorem \ref{thm:Qiclose}, $f_1\in\sum_{i=1}^m\sum\mathbb{R}[\bar{x}]_{Q_i}^2$. So $$\{f\in\sum\mathbb{R}[\bar{x}]_{k}^2~|~S(f)=S_1\}\subseteq\sum_{i=1}^m\sum\mathbb{R}[\bar{x}]_{Q_i}^2.$$

  Therefore, $\{Q_i\}_{i=1}^{m}$ is a block SOS support of $S_1$.
\end{proof}

Based on the above two lemmas, we then prove that, for an arbitrarily given degree bound $2k$, the {\em full polynomials}, {\it i.e.} polynomials having all monomials with degree no more than $2k$, are not block SOS decomposable.

\begin{mythm}\label{pro:lambda2k}
  For any $k\in\mathbb{N}$, $\Lambda(2k)$ is not block SOS decomposable.
\end{mythm}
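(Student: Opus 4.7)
The plan is to argue by contradiction: assume $\{Q_i\}_{i=1}^{m}$ is a proper block SOS support of $\Lambda(2k)$, and derive a contradiction by forcing all of $\Lambda(k)$ to lie inside a single block $Q_{i^*}$, which then makes the properness clause fail at $i^*$.

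First I would record two preliminary observations. The polynomial $p_0=(\sum_{\alpha\in\Lambda(k)}\bar{x}^{\alpha})^2$ is SOS with $S(p_0)=\Lambda(k)+\Lambda(k)=\Lambda(2k)$, so the hypothesis set $\{f\in\sum\mathbb{R}[\bar{x}]_k^2\mid S(f)=\Lambda(2k)\}$ used in Lemma \ref{pro:s1s2} is nonempty. Also, because the leading coefficient $c^2\ge 0$ of a squared term $(c\bar{x}^{\gamma}+\cdots)^2$ cannot be cancelled in a real sum of squares, any element of $\sum\mathbb{R}[\bar{x}]_{Q_i}^2$ that lies in $\mathbb{R}[\bar{x}]_{2k}$ must in fact be expressible using only summands supported in $Q_i\cap\Lambda(k)$; in particular, a block $Q_i$ disjoint from $\Lambda(k)$ contributes only $0$ to polynomials of degree $\le 2k$.

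Next, for each pair $\alpha,\beta\in\Lambda(k)$ with $\alpha\ne\beta$ and $\alpha-\beta\in\{-1,0,1\}^n$, I consider $g_{\alpha,\beta}=(\bar{x}^{\alpha}+\bar{x}^{\beta})^2$, whose support $\{2\alpha,\alpha+\beta,2\beta\}$ sits inside $\Lambda(2k)$. By Lemma \ref{pro:s1s2}, $\{Q_i\}_{i=1}^{m}$ is also a block SOS support of $S(g_{\alpha,\beta})$, so $g_{\alpha,\beta}\in\sum_{i=1}^{m}\sum\mathbb{R}[\bar{x}]_{Q_i}^2$. Lemma \ref{lem:(a-b)2} then forces any such decomposition to contain a summand of the form $\lambda\bar{x}^{\alpha}+\mu\bar{x}^{\beta}$ with $\lambda\mu\ne 0$, whose support $\{\alpha,\beta\}$ must lie entirely inside a single $Q_i$; by the disjointness of the blocks, $\alpha$ and $\beta$ necessarily belong to the same block.

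Viewing the neighbor relation $\alpha-\beta\in\{-1,0,1\}^n\setminus\{0\}$ as a graph on $\Lambda(k)$, connectivity is immediate: any $\alpha\in\Lambda(k)$ reaches the origin by repeatedly decrementing a positive coordinate, with every intermediate point staying in $\Lambda(k)$. Hence all of $\Lambda(k)$ collapses into a single block $Q_{i^*}$. I then invoke the standard fact that every SOS polynomial in $\mathbb{R}[\bar{x}]_{2k}$ admits an SOS decomposition with summands in $\mathbb{R}[\bar{x}]_k$ (top-degree squared terms cannot cancel over $\mathbb{R}$); since $\mathbb{R}[\bar{x}]_k\subseteq\mathbb{R}[\bar{x}]_{Q_{i^*}}$, this yields $\operatorname{SOSS}(f,Q_{i^*})$ for every SOS $f$ with $S(f)=\Lambda(2k)$, directly contradicting the properness clause at $i^*$. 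The main subtlety I expect is cleanly handling the auxiliary blocks that might lie outside $\Lambda(k)$; but by the first paragraph these blocks are invisible to polynomials of degree $\le 2k$, so they can neither prevent $\Lambda(k)$ from collapsing into one block nor rescue the failed properness at $Q_{i^*}$.
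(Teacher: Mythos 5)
Your proof is correct and follows essentially the same route as the paper's: both rest on Lemma \ref{pro:s1s2} and Lemma \ref{lem:(a-b)2} applied to $(\bar{x}^a+\bar{x}^b)^2$ for lattice-adjacent $a,b\in\Lambda(k)$, together with connectivity of $\Lambda(k)$ under the $\pm e_i$ neighbor relation. The only difference is organizational: you show every adjacent pair must land in a common block and conclude $\Lambda(k)\subseteq Q_{i^*}$, violating properness, whereas the paper starts from $\Lambda(k)\not\subseteq Q_i$ (forced by properness) and exhibits a single adjacent pair split across blocks, contradicting Lemma \ref{lem:(a-b)2}.
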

\begin{proof}
  Suppose $\{Q_i\}_{i=1}^{m}$ is a proper block SOS support of $\Lambda(2k)$. Then for any $Q_i (i=1,\ldots,m)$, we have $\Lambda(k)\not\subseteq Q_i.$

  Denote by $\{e_i\}_{i=1}^{n}$ the unit vectors of $\mathbb{N}^n$. Firstly, we prove there exist $a, b\in\Lambda(k)$ satisfying $a\neq b$, $a-b\in\{-1,0,1\}^n$ and $a\notin Q_i\lor b\notin Q_i$ for any $i=1,\ldots,m.$

  If every $Q_i (i=1,...,m)$ is empty, simply let $a=0, b=e_1$.

  Suppose there exists a non-empty $Q'\in\{Q_i\}_{i=1}^{m}$. Since $\Lambda(k)\not\subseteq Q'$, there exist $\alpha, \beta\in\Lambda(k)$ such that $\alpha\in Q'$ and $\beta\notin Q'$. Let $\{r_j\}_{j=1}^{u}\subset \Lambda(k)$ be a path connecting $\alpha$ ($=r_1$) and $\beta$ ($=r_u$) in $\Lambda(k)$ and satisfying for any $j\in\{1,...,u\}$ there exists $i\in\{1,...,n\}$ such that $r_{j+1}-r_j=\pm e_i$. Then there exists $j_0\in\{1,\ldots,u-1\}$ such that $r_{j_0}\in Q'$ and $r_{j_0+1}\notin Q'$. We set $a=r_{j_0}, b=r_{j_0+1}$. Because $Q_i (i=1,\ldots,m)$ are pairwise disjoint, $a\notin Q_i\lor b\notin Q_i$ for any $i=1,\ldots,m.$

   %So there exist $a,b\in\Lambda(k)$ satisfying $a\neq b$,$a-b\in\{-1,0,1\}^n$ and $\forall i\in{1,..,m}(a\notin Q_i\lor b\notin Q_i)$.
   By Lemma \ref{pro:s1s2}, $\{Q_i\}_{i=1}^{m}$ is also a block SOS support of $\{2a,a+b,2b\}$. So $(\bar{x}^a+\bar{x}^b)^2\in\sum_{i=1}^m\sum\mathbb{R}[\bar{x}]_{Q_i}^2$. But by Lemma \ref{lem:(a-b)2} we see that $(\bar{x}^a+\bar{x}^b)^2=\sum_{i=1}^m(\lambda_{i1}\bar{x}^a+\lambda_{i2}\bar{x}^b)^2$ and there exists $i$ such that $\lambda_{i1}\lambda_{i2}\neq0$. That contradicts with the claim that for any $i\in\{1,\ldots,m\}$, $a\notin Q_i\lor b\notin Q_i$.

    So $\Lambda(k)$ is the only block SOS support of $\Lambda(2k)$.
\end{proof}

\begin{mythm}\label{thm:zero}
For any $k\in\mathbb{N}$, the set of block SOS decomposable polynomials in $\mathbb{R}[\bar{x}]_{2k}$ has measure zero in $\sum\mathbb{R}[\bar{x}]_{k}^2$. %$\mathbb{R}[\bar{x}]_{2k}$.
\end{mythm}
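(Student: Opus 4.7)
The plan is to reduce Theorem~\ref{thm:zero} directly to Theorem~\ref{pro:lambda2k}; the combinatorial heavy lifting has already been done there, and what remains is a short dimension/measure argument. The key observation is that, by Definition~\ref{def:square_split-able}, $f\in\mathbb{R}[\bar{x}]_{2k}$ is block SOS decomposable precisely when its support $S(f)$ admits a proper block SOS support. Theorem~\ref{pro:lambda2k} states that $\Lambda(2k)$ itself is \emph{not} block SOS decomposable. Consequently, every block SOS decomposable $f\in\mathbb{R}[\bar{x}]_{2k}$ must satisfy $S(f)\subsetneq\Lambda(2k)$, i.e.\ at least one coefficient $f_\alpha$ (for some $\alpha\in\Lambda(2k)$) of $f$ vanishes.

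Writing $f=\sum_{\alpha\in\Lambda(2k)}f_\alpha\bar{x}^\alpha$ and letting $\mathcal{B}$ denote the set of block SOS decomposable polynomials in $\mathbb{R}[\bar{x}]_{2k}$, the previous paragraph yields the inclusion
\[
\mathcal{B}\subseteq\bigcup_{\alpha\in\Lambda(2k)}\{f\in\mathbb{R}[\bar{x}]_{2k}\mid f_\alpha=0\}.
\]
The right-hand side is a finite union of codimension-one linear subspaces of the finite-dimensional real vector space $\mathbb{R}[\bar{x}]_{2k}\cong\mathbb{R}^{\#\Lambda(2k)}$. Each such hyperplane has Lebesgue measure zero, and a finite union of null sets is null, so $\mathcal{B}$ has Lebesgue measure zero in $\mathbb{R}[\bar{x}]_{2k}$.

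To interpret the statement inside $\sum\mathbb{R}[\bar{x}]_k^2$, I will check that $\sum\mathbb{R}[\bar{x}]_k^2$ is a full-dimensional closed convex cone in $\mathbb{R}[\bar{x}]_{2k}$. Every monomial $\bar{x}^{\alpha+\beta}$ with $\alpha,\beta\in\Lambda(k)$ can be written as $\frac{1}{2}\bigl((\bar{x}^\alpha+\bar{x}^\beta)^2-(\bar{x}^\alpha)^2-(\bar{x}^\beta)^2\bigr)$, so the real linear span of $\sum\mathbb{R}[\bar{x}]_k^2$ equals $\mathbb{R}[\bar{x}]_{2k}$; a spanning convex cone in finite dimensions has nonempty interior. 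Hence the natural measure on $\sum\mathbb{R}[\bar{x}]_k^2$ is the restriction of Lebesgue measure from $\mathbb{R}[\bar{x}]_{2k}$, and any ambient Lebesgue-null subset remains null under this restriction.

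The main ``obstacle'' is essentially absent from this final step: Theorem~\ref{pro:lambda2k} already supplies the crucial impossibility, and the only thing left is the passage from ``some coefficient vanishes'' to a finite union of coordinate hyperplanes. The sole subtlety is articulating what ``measure zero in $\sum\mathbb{R}[\bar{x}]_k^2$'' means, namely the restricted Lebesgue measure inherited from the full-dimensional ambient vector space $\mathbb{R}[\bar{x}]_{2k}$.
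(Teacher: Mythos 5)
Your proposal is correct and follows essentially the same route as the paper: both deduce from Theorem~\ref{pro:lambda2k} that block SOS decomposable polynomials lie in $\{f\in\mathbb{R}[\bar{x}]_{2k}\mid S(f)\neq\Lambda(2k)\}$, a Lebesgue-null union of coordinate hyperplanes, and both note that $\sum\mathbb{R}[\bar{x}]_k^2$ is a full-dimensional convex cone in $\mathbb{R}[\bar{x}]_{2k}$ so that the null set remains null relative to it. Your version merely spells out the hyperplane decomposition and the polarization identity for full-dimensionality in slightly more detail than the paper does.
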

\begin{proof}
By Theorem \ref{pro:lambda2k}, we see that the set of block SOS decomposable polynomials in $\mathbb{R}[\bar{x}]_{2k}$ is a subset of $$\{f\in\mathbb{R}[\bar{x}]_{2k}~|~S(f)\neq\Lambda(2k)\}.$$
So it has measure zero in $\mathbb{R}[\bar{x}]_{2k}$. Meanwhile we know $\sum \mathbb{R}[\bar{x}]_{k}^2$ has measure non-zero in $\mathbb{R}[\bar{x}]_{2k}$ because the cone $\sum \mathbb{R}[\bar{x}]_{k}^2$ is a full-dimensional convex cone in $\mathbb{R}[\bar{x}]_{2k}$, {\it i.e.}, $\sum \mathbb{R}[\bar{x}]_{k}^2-\sum \mathbb{R}[\bar{x}]_{k}^2=\mathbb{R}[\bar{x}]_{2k}$. Thus the set of block SOS decomposable polynomials in $\sum\mathbb{R}[\bar{x}]^2_{k}$ must have measure zero in $\sum\mathbb{R}[\bar{x}]_{k}^2$.
\end{proof}

By the definition of block SOS decomposable polynomial, Theorems \ref{pro:lambda2k} and \ref{thm:zero} indicate that those polynomials whose SDP matrices (corresponding to their SOS decompositions) can be block-diagonalized are very few in the set of SOS polynomials.

\section{Conclusion}\label{sec:conclusion}
Finding split polynomials or minimal coordinate projection \cite{Dai2015,DBLP:conf/cdc/PermenterP15} are two recently proposed methods which, if success, can decrease the size of SDP matrices via block-diagnalization and thus increase greatly the efficiency and reliability of the SDP based SOS decomposition methods. In this paper, we investigate the scope that the two methods work. We first define a class of polynomials, namely, {\em block SOS decomposable polynomials}, which is a generalization of those classes of polynomials in \cite{Dai2015} and \cite{DBLP:conf/cdc/PermenterP15}. Roughly speaking, it is a class of polynomials whose SOS decomposition problem can be transformed into smaller ones by considering their supports only (coefficients are not considered). Then we prove that the set of block SOS decomposable polynomials has measure zero in the set of SOS polynomials. That means if we only consider supports (not with coefficients) of polynomials, such algorithms decreasing the size of corresponding SDP matrices can only work on very few polynomials. As a result, this shows that the SOS decomposition problems that can be optimized by the above mentioned methods are very few.

%On the other hand, we do not have an algorithm for finding block SOS decomposable polynomials and we do not know whether block SOS decomposable polynomials are just those with minimal coordinate projection, neither. As for the efficiency of such kind of methods, the algorithm in \cite{DBLP:conf/cdc/PermenterP15} is the best.
On the other hand, it is interesting to give an algorithm for finding block SOS decomposable polynomials in the future. By now the most efficient way to block-diagnolize corresponding SDP matrices is to find minimal coordinate projection by the algorithm of \cite{DBLP:conf/cdc/PermenterP15}. We do not know whether block SOS decomposable polynomials are just those whose corresponding SDPs have minimal coordinate projections. At least, there are examples (see Examples \ref{exa:f} and \ref{ex:f1}) whose block SOS supports computed via minimal coordinate projection are not minimal.

\section*{Acknowledgement}
This work was supported partly by NSFC under grants 61732001 and 61532019 and CDZ project CAP (GZ 1023).

\bibliographystyle{ACM-Reference-Format}
\bibliography{bsd}

\end{document}